\def\PrintMode{0}%
\newtheorem{theorem}{Theorem}[section]
\newtheorem{lemma}[theorem]{Lemma}
\newtheorem{claim}[theorem]{Claim}
\theoremstyle{definition}
\newtheorem{definition}[theorem]{Definition}
\theoremstyle{remark}
\newtheorem{remark}[theorem]{Remark}
\renewcommand*\backref[1]{\ifx#1\relax \else (cit. on p. #1) \fi} %
\algnewcommand{\IfThen}[2]%
{\State \algorithmicif\ #1\ \algorithmicthen\ #2}
\def\moverlay{\mathpalette\mov@rlay}
\def\mov@rlay#1#2{\leavevmode\vtop{%
		\baselineskip\z@skip \lineskiplimit-\maxdimen
		\ialign{\hfil$\m@th#1##$\hfil\cr#2\crcr}}}
\newcommand{\charfusion}[3][\mathord]{
	#1{\ifx#1\mathop\vphantom{#2}\fi
		\mathpalette\mov@rlay{#2\cr#3}
	}
	\ifx#1\mathop\expandafter\displaylimits\fi}
\newcommand{\To}{\rightsquigarrow}
\newcommand{\bfA}{\mathbf{A}}
\newcommand{\bfB}{\mathbf{B}}
\newcommand{\bfC}{\mathbf{C}}
\newcommand{\bfF}{\mathbf{F}}
\newcommand{\bfI}{\mathbf{I}}
\newcommand{\bfM}{\mathbf{M}}
\newcommand{\bfN}{\mathbf{N}}
\newcommand{\bfZ}{\mathbf{Z}}
\newcommand{\bfb}{\mathbf{b}}
\newcommand{\bfe}{\mathbf{e}}
\newcommand{\bfp}{\mathbf{p}}
\newcommand{\bfu}{\mathbf{u}}
\newcommand{\bfv}{\mathbf{v}}
\newcommand{\caD}{\mathcal{D}}
\newcommand{\caH}{\mathcal{H}}
\newcommand{\caR}{\mathcal{R}}
\newcommand{\dbF}{\mathbb{F}}
\newcommand{\dbZ}{\mathbb{Z}}
\newcommand{\sfD}{\mathsf{D}}
\newcommand{\sfL}{\mathsf{L}}
\newcommand{\sfR}{\mathsf{R}}
\newcommand{\sfT}{\mathsf{T}}
\newcommand{\sfU}{\mathsf{U}}
\newcommand{\Tin}{T^{\sf in}}
\newcommand{\Tout}{T^{\sf out}}
\newcommand{\MM}{\mathsf{MM}}
\newcommand{\SA}{\mathsf{SA}}
\newcommand{\Fast}{\mathsf{Fast}}
\newcommand{\Extend}{\mathsf{Extend}}
\newcommand{\Dstart}{\caD^{\mathsf{start}}}
\newcommand{\Dfinal}{\caD^{\mathsf{final}}}
\DeclareMathOperator{\adj}{adj}
\DeclareMathOperator{\diag}{diag}
\DeclareMathOperator{\cdeg}{cdeg}
\def\ShowAuthNotes{1}
\newcommand{\authnote}[2]{\ \\ \textcolor{red}{\parbox{0.9\linewidth}{[{\footnotesize {\bf #1:} { {#2}}}]}}\newline}
\newcommand{\authnote}[2]{}
\begin{document}
	
\title{Constructing a Distance Sensitivity Oracle in $O(n^{2.5794}M)$ Time}

\author[1]{Yong Gu \thanks{\href{mailto:guyong12@mails.tsinghua.edu.cn}{\texttt{guyong12@mails.tsinghua.edu.cn}}.}}
\author[1]{Hanlin Ren \thanks{\href{mailto:rhl16@mails.tsinghua.edu.cn}{\texttt{rhl16@mails.tsinghua.edu.cn}}.}}

\affil[1]{Institute for Interdisciplinary Information Sciences, Tsinghua University}

\maketitle

\begin{abstract}
	We continue the study of \emph{distance sensitivity oracles} (DSOs). Given a directed graph $G$ with $n$ vertices and edge weights in $\{1, 2, \dots, M\}$, we want to build a data structure such that given any source vertex $u$, any target vertex $v$, and any failure $f$ (which is either a vertex or an edge), it outputs the length of the shortest path from $u$ to $v$ not going through $f$. Our main result is a DSO with preprocessing time $O(n^{2.5794}M)$ and constant query time. Previously, the best preprocessing time of DSOs for directed graphs is $O(n^{2.7233}M)$, and even in the easier case of undirected graphs, the best preprocessing time is $O(n^{2.6865}M)$ [Ren, ESA 2020]. One drawback of our DSOs, though, is that it only supports distance queries but not path queries.
	
	Our main technical ingredient is an algorithm that computes the inverse of a degree-$d$ polynomial matrix (i.e.~a matrix whose entries are degree-$d$ univariate polynomials) modulo $x^r$. The algorithm is adapted from [Zhou, Labahn, and Storjohann, {\it Journal of Complexity}, 2015], and we replace some of its intermediate steps with faster rectangular matrix multiplication algorithms.
	
	We also show how to compute \emph{unique} shortest paths in a directed graph with edge weights in $\{1, 2, \dots, M\}$, in $O(n^{2.5286}M)$ time. This algorithm is crucial in the preprocessing algorithm of our DSOs. Our solution improves the $O(n^{2.6865}M)$ time bound in [Ren, ESA 2020], and matches the current best time bound for computing all-pairs shortest paths.
\end{abstract}

\section{Introduction}
In this paper, we consider the problem of constructing a \emph{distance sensitivity oracle} (DSO). A DSO is a data structure that preprocesses a directed graph $G = (V, E)$ with $n$ vertices and $m$ edges, and supports queries of the following form: Given a source vertex $u$, a target vertex $v$, and a failure $f$ (which can be either a vertex or an edge), output the length of the shortest path from $u$ to $v$ that does not go through $f$.

One motivation for constructing DSOs is the fact that real-life networks often suffer from failures. Consider a communication network among $n$ servers. When a server $u$ wants to send a message to another server $v$, the most efficient way would be to send the message along the shortest path from $u$ to $v$. However, if a failure happens in a server or a link between two servers, we would need to recompute the shortest path with the failure taken into account. It may be too slow to compute the shortest path from scratch each time a failure happens. A better solution is to construct a DSO for the communication network, and invoke the query algorithm of the DSO whenever a failure happens.

\subsection{Related Work}
The problem of constructing DSOs has received a lot of attention in the literature. A na\"ive solution is to precompute the answers for every possible query $(u, v, f)$, but it requires $\Omega(n^2m)$ space to store this DSO. Demetrescu et al.~\cite{DemetrescuTCR08} constructed a DSO with $O(n^2\log n)$ space that answers a query in constant time. However, the preprocessing time of the DSO in \cite{DemetrescuTCR08} is $O(mn^2 + n^3\log n)$, which is inefficient for large networks. Subsequently, Bernstein and Karger improved the preprocessing time to $\tilde{O}(n^2\sqrt{m})$ \cite{BernsteinK08}, and finally $\tilde{O}(mn)$ \cite{BernsteinK09}.\footnote{$\tilde{O}$ hides $\operatorname{polylog}(n)$ factors.} The preprocessing time $\tilde{O}(mn)$ matches the current best time bound for the easier problem of computing \emph{all-pairs shortest paths} (APSP), and it is conjectured that APSP requires $mn^{1-o(1)}$ time \cite{LincolnWW18}. In this sense, the $\tilde{O}(mn)$ time bound of \cite{BernsteinK09} is optimal. Duan and Zhang \cite{DuanZ17a} improved the space complexity of the DSO to $O(n^2)$, eliminating the last $\log n$ factor, while preserving constant query time and $\tilde{O}(mn)$ preprocessing time.

However, for dense graphs (i.e.~$m=\Theta(n^2)$) with edge weights in $[-M, M]$, it is possible to compute APSP in time faster than $\tilde{O}(mn) = \tilde{O}(n^3)$. The best APSP algorithm for undirected graphs runs in $\tilde{O}(n^\omega M)$ time \cite{Seidel95, ShoshanZ99}, and the best APSP algorithm for directed graphs runs in $O(n^{2.5286}M)$ time \cite{AlonGM97, Zwick02}. (Here $\omega < 2.3728596$ is the exponent of matrix multiplication \cite{CW90, Sto10, Wil12, LeGall, AlmanW21}.) Therefore, it is natural to ask whether one can beat $\tilde{O}(n^3)$ preprocessing time for DSOs in this regime.

The answer turned out to be \emph{yes}. Weimann and Yuster \cite{WeimannY13} showed that for any constant $0 < \alpha < 1$, there is a DSO with $\tilde{O}(n^{1-\alpha + \omega}M)$ preprocessing time and $\tilde{O}(n^{1+\alpha})$ query time. Subsequently, Grandoni and Williams \cite{GrandoniW12} showed that for any constant $0 < \alpha < 1$, there is a DSO with $\tilde{O}(n^{\omega + 1/2}M + n^{\omega + \alpha(4-\omega)}M)$ preprocessing time and $\tilde{O}(n^{1-\alpha})$ query time. Recently, Chechik and Cohen \cite{ChechikC20} constructed the first DSO that achieves both sub-cubic ($O(n^{2.873}M)$) preprocessing time and poly-logarithmic query time simultaneously. For the case that edge weights are positive, Ren \cite{Ren20} improved the previous results by presenting a much simpler DSO with $\tilde{O}(n^{2.7233}M)$ preprocessing time and constant query time.

Note that most DSOs mentioned above are randomized. Recently, there are also some efforts on derandomizing these DSOs, see e.g.~\cite{AlonCC19, KarthikP21}.

\subsection{Our Results}

Our main result is an improved DSO for directed graphs with integer edge weights in $[1, M]$. In particular, our DSO has preprocessing time $O(n^{2.5794}M)$ and constant query time.

\begin{theorem}[Main]\label{thm:main}
	Given as input a directed graph $G=(V, E)$ with edge weights in $\{1, 2, \dots, M\}$, we can construct a DSO with $O(n^{2.5794}M)$ preprocessing time and constant query time. With high probability over the randomized preprocessing algorithm, the DSO answers every possible query correctly.
\end{theorem}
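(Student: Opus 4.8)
The plan is to combine the three ingredients assembled in this paper: the $O(n^{2.5286}M)$-time computation of all-pairs shortest paths together with a consistent family of \emph{unique} shortest paths; the fast polynomial-matrix-inverse routine, which is the Zhou--Labahn--Storjohann algorithm sped up by substituting rectangular matrix multiplication for some of its square matrix products; and a hitting-set argument. We fix a threshold $L$ separating ``short'' detours from ``long'' ones; the final choice of $L$ will balance the two halves of the construction and is what produces the exponent. The target is a data structure of size $\tilde{O}(n^2)$ from which each query is answered as the minimum of $O(\log(nM))$ stored values, hence in $O(1)$ time on the word RAM; the reduction from such a table of ``detour distances'' to a constant-query oracle is exactly as in \cite{Ren20, ChechikC20}, so the real work is computing the table.

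First run the APSP algorithm to obtain the distance matrix and the canonical shortest paths $\pi(u,v)$, and record the induced in-trees and out-trees. For a query $(u,v,f)$ with $f\notin\pi(u,v)$ the answer is $d(u,v)$, so it remains to handle failures on $\pi(u,v)$. A replacement path then follows a prefix of $\pi(u,v)$ to some vertex $p$ before $f$, takes a detour avoiding $f$ to some vertex $q$ after $f$, and follows the suffix $\pi(q,v)$; hence $d(u,v,f)$ is the minimum of $d(u,p)+\delta+d(q,v)$ over such $p,q$, where $\delta$ is the length of a shortest $p$-to-$q$ walk avoiding $f$. Consistency of the canonical paths lets us organize the triples $(p,q,f)$ along $\pi(u,v)$ into the hierarchical structure of \cite{Ren20, ChechikC20}, so that only $\tilde{O}(n^2)$ detour distances $\delta$ must be materialized; these we compute in two cases according to the hop-length of the detour.

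For short detours (at most $L$ hops) we use algebra. Give each edge $e$ of weight $w(e)$ the monomial $c_e x^{w(e)}$ with $c_e$ uniform over a field of size $\poly(n)$, forming $A(x)$, and set $r:=LM+1$; since weights are positive, any walk of total weight $<r$ has fewer than $r$ edges, so $N(x):=(I-A(x))^{-1}\bmod x^{r}$ records exactly the short walks. Compute $N(x)$ using the polynomial-matrix-inverse routine. The lowest-degree term of $N_{uv}(x)$ sits at degree $d(u,v)$ with coefficient $\sum_{P}\prod_{e\in P}c_e$ summed over minimum-weight $u$-$v$ walks $P$; since these walks are simple paths, distinct ones give distinct monomials, the coefficient is a nonzero polynomial in the $c_e$, and by Schwartz--Zippel it is nonzero with high probability, letting us read off $d(u,v)$. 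Walks avoiding a vertex $f$ are captured by $N_{uv}(x)-N_{uf}(x)N_{ff}(x)^{-1}N_{fv}(x)$ (the Sherman--Morrison--Woodbury identity; $N_{ff}(x)$ has constant term $1$, hence is invertible as a power series), and an edge failure by an analogous rank-one correction, so all short detour distances with prescribed endpoints are obtained; a union bound over the $\poly(n)$ relevant events gives correctness.

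For long detours (more than $L$ hops) sample a set $B$ of $\tilde{O}(n/L)$ vertices; with high probability $B$ meets the detour of every relevant replacement path, so such a detour passes through some $b\in B$ and splits into a shortest path avoiding $f$ from its start to $b$ and one from $b$ to its end. Using the canonical in-/out-trees this reduces to single-source and single-sink replacement distances rooted at the $\tilde{O}(n/L)$ pivots, each subproblem being of the short-detour type with one endpoint pinned to a pivot (or a few direct shortest-path computations in $G\setminus\{f\}$), all affordable once $L$ is fixed. Materializing the table and feeding it to the structure of \cite{Ren20, ChechikC20} completes the oracle. For the running time, computing the APSP/unique-paths data and building the query structure cost $O(n^{2.5286}M)$ and $\tilde{O}(n^2)$ respectively, and the remaining cost is that of the polynomial-matrix-inverse computation---bounded via rectangular matrix multiplication in the Zhou--Labahn--Storjohann framework with modulus degree $r=\Theta(LM)$---plus the pivot computations; choosing $L$ to balance these two yields $O(n^{2.5794}M)$. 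The main obstacle is exactly this last point: making the polynomial-matrix-inverse routine fast enough by plugging rectangular matrix multiplication into Zhou--Labahn--Storjohann and tracking how its cost scales with $r=\Theta(LM)$, so that after balancing against the pivot step the exponent falls below the previous $2.7233$.
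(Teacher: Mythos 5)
Your algebraic ingredient matches the paper: truncate the inverse of the symbolic adjacency matrix modulo $x^r$ via the accelerated Zhou--Labahn--Storjohann routine, read distances off lowest-degree terms (Schwartz--Zippel), and handle a single failure by a Sherman--Morrison--Woodbury rank-one correction; likewise the use of consistent unique shortest-path trees to feed a Bernstein--Karger-style $\tilde O(n^2)$ table. Up to that point the proposal is the paper's construction of an $r$-truncated DSO.

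The genuine gap is in how you pass from the $r$-truncated (``short-detour'') structure to the full oracle. You propose a single hitting-set step: sample $\tilde O(n/L)$ pivots, split every long detour at a pivot, and claim the resulting pivot-rooted subproblems are ``of the short-detour type'' or reduce to ``a few'' shortest-path computations in $G\setminus\{f\}$. That claim fails: a replacement path from a pivot $b$ to $v$ avoiding $f$ can itself have far more than $L$ hops, so it is not captured by the degree-$r$ truncation, and you are back to needing unrestricted replacement distances with one endpoint pinned --- exactly the problem you set out to solve. If you instead instantiate this step with known exact single-source replacement-path algorithms from each of the $\tilde O(n/L)$ pivots, the cost is $\tilde O(Mn^{\omega+1}/L)$, and balancing $Mn^{\omega+1-\alpha}$ (with $L=n^\alpha$) against the algebraic term $Mn^{2\alpha+\omega(1,1-\alpha,1-\alpha)}$ lands around exponent $2.72$, not $2.5794$; doing Dijkstra in $G\setminus\{f\}$ for all relevant $f$ is even worse. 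The paper avoids this circularity by iterating Ren's two transformations $O(\log(nM/r))$ times: $\Extend$ raises the truncation radius only by a factor $3/2$, so the pivot distances it needs are already answerable in $O(1)$ time by the previously built oracle, and $\Fast$ (using the consistent trees) restores $O(1)$ query time with only $\tilde O(n^2)$ queries per scale, giving per-scale cost $\tilde O(n^{3-\alpha}M)$ and the final balance $\max\{2+\alpha,\,2+\mu,\,3-\alpha,\,2\alpha+\omega(1,1-\alpha,1-\alpha)\}$ at $\alpha\approx 0.4206$. Without this scale-by-scale bootstrapping (or some substitute argument for the long range), your construction as written does not yield the claimed $O(n^{2.5794}M)$ bound.
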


\begin{remark}
	Our preprocessing algorithm uses fast \emph{rectangular} matrix multiplication algorithms. To express our time bound as a function of $\omega$, we could also simulate rectangular matrix multiplications by square matrix multiplications, e.g.~multiply an $n\times m$ matrix and an $m\times n$ matrix by $\lceil m/n\rceil$ square matrix multiplications of dimension $n$. In this case, the preprocessing time becomes $\tilde{O}(n^{2+1/(4-\omega)}M) < O(n^{2.6146}M)$.
\end{remark}

\begin{remark}[Comparison with Prior Works]
	The biggest advantage of our DSO is, of course, its fast preprocessing algorithm. In fact, the preprocessing time bound is only an $O(n^{0.051})$ factor away from the current best time bound for APSP. Our DSO is also the first one to break a barrier of $\tilde{\Omega}(n^{8/3})$ preprocessing time while keeping constant query time.\footnote{There are three previous DSOs with both sub-cubic preprocessing time and constant query time: \cite{GrandoniW12}, \cite{ChechikC20}, and \cite{Ren20}. (The query time of the first two DSOs can be brought down to constant using Observation 2.1 of \cite{Ren20}. In the case of \cite{GrandoniW12}, this increases the preprocessing time by an additive factor of $\tilde{O}(n^{3-\alpha})$.) Even when $\omega = 2$, the preprocessing time bounds of these DSOs are $\tilde{O}(n^{8/3})$ (setting $\alpha$ appropriately), $\tilde{O}(n^{14/5})$, and $\tilde{O}(n^{8/3})$ respectively.} However, our DSO has two drawbacks. First, it can only return the length of the shortest path. It does not suggest an efficient way to produce this path. Second, it does not support negative edge weights.
\end{remark}

We highlight two technical ingredients that are crucial for the preprocessing algorithm of our DSO.

\paragraph{Inverting a polynomial matrix modulo $x^r$.} Let $r$ be an integer parameter, and $\bfF$ be a polynomial matrix of degree $d$ (i.e.~each entry of $\bfF$ is a degree-$d$ polynomial over some formal variable $x$) that is invertible. We show how to compute $\bfF^{-1} \bmod x^r$ in time
\[\tilde{O}(dn^\omega)+(r^2/d)\cdot \MM(n, nd/r, nd/r)\cdot n^{o(1)}.\]
(That is, we only preserve the monomials in $\bfF^{-1}$ with degrees at most $r-1$.) Here, $\MM(n_1, n_2, n_3)$ is the time complexity of multiplying an $n_1\times n_2$ matrix and an $n_2\times n_3$ matrix.

It is shown in \cite{ZhouLS15} that we can compute the full $\bfF^{-1}$ (instead of $\bfF^{-1}\bmod x^r$) in $\tilde{O}(n^3d)$ time. We examine their algorithm carefully and adapt it to our case where we only want to compute $\bfF^{-1}\bmod x^r$. We modulo each polynomial in the intermediate steps of the algorithm by $x^r$, and use fast rectangular matrix multiplication to speed up the algorithm.

\begin{restatable}{theorem}{ThmInvertAlgo}\label{thm:invert-algo}
	Let $r$ be an integer, $\dbF$ be a finite field. Let $\bfF\in (\dbF[x]/\langle x^r\rangle)^{n\times n}$ be an $n\times n$ matrix over the ring of polynomials modulo $x^r$, and let $d\ge 1$ be an upper bound on the degrees of entries of $\bfF$. If $\bfF$ is invertible over $(\dbF[x]/\langle x^r\rangle)^{n\times n}$, the number of field operations to compute $\bfF^{-1}\bmod x^r$ is at most
	\[\tilde{O}(dn^\omega)+(r^2/d)\cdot\MM(n, nd/r, nd/r) \cdot n^{o(1)}.\]
\end{restatable}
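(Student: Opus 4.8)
The plan is to follow the structure of the Zhou--Labahn--Storjohann inversion algorithm \cite{ZhouLS15}, which computes the full inverse of a polynomial matrix in $\tilde O(n^3 d)$ time, and to modify it at two points: (i) every polynomial arithmetic operation is carried out modulo $x^r$, so that intermediate entries never have degree exceeding $r-1$; and (ii) the dominating matrix-multiplication steps are batched into \emph{rectangular} products whose aspect ratios we control, so that they can be handled by fast rectangular matrix multiplication rather than na\"ively as $O(r/d)$ square products. Recall that the ZLS algorithm reduces inversion to computing a (shifted) minimal kernel basis / column reduced form, and then recovers $\bfF^{-1}$ from a system $\bfF \cdot \bfF^{-1} = \bfI$ by a divide-and-conquer high-order lifting: the inverse is built up coefficient block by coefficient block, where at the $i$-th level one multiplies $\bfF$ against a block of the partially-computed inverse to obtain the next residual, and the residual has bounded degree. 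The key quantitative observation is that the \emph{total} degree of $\bfF^{-1}$ that we care about is $r$, and since $\bfF$ has degree $d$, the inverse $\bfF^{-1}\bmod x^r$ can be organized as roughly $r/d$ ``slabs'' each of degree $O(d)$; each slab is produced by a matrix product of the shape $(n\times n)\cdot(n\times n)$ but with polynomial entries of degree $O(d)$, which via Kronecker substitution becomes a product over $\dbF$ of dimensions $n \times nd \cdot n$-ish, i.e.\ $\MM(n, nd, n)$ — and after carefully distributing the $r/d$ slabs one gets the claimed $(r^2/d)\cdot \MM(n, nd/r, nd/r)\cdot n^{o(1)}$ after a rebalancing of which dimension absorbs the degree.

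Concretely, I would proceed as follows. First, state the preprocessing step: compute a column-reduced form (or an $x$-basis) of $\bfF$ modulo a suitable power of $x$; this is the $\tilde O(dn^\omega)$ term, and one cites the relevant subroutine of \cite{ZhouLS15} (minimal approximant basis / order basis computation at order $O(d)$ costs $\tilde O(n^\omega d)$ by Giorgi--Jeannerod--Villard-style algorithms). Second, set up the high-order lifting recursion for $\bfF^{-1}\bmod x^r$: write $r = \Theta(k d)$ for $k := \lceil r/d\rceil$, and show that one can compute, for $j = 0, 1, \dots, k-1$, the degree-$[jd, (j+1)d)$ part of $\bfF^{-1}$ from the previous parts using one matrix product of $n\times n$ degree-$O(d)$ polynomial matrices (the ``residual update''), plus a multiplication by the precomputed inverse-of-leading-part, which is again an $n\times n$ degree-$O(d)$ multiplication. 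Third, the crucial efficiency step: instead of performing these $k$ products one at a time, group them — the whole collection of residual updates is equivalent to multiplying $\bfF$ (degree $d$, shape $n\times n$) by a single $n\times (nk)$ polynomial matrix of degree $O(d)$ that stacks all the slabs side by side, then reading off blocks; via Kronecker substitution $x \mapsto$ evaluation this is an $\MM(n,\, nd,\, nk)$ product over $\dbF$. Using $nk \approx nr/d$ and the standard splitting identity $\MM(n, nd, nr/d) \le (r^2/d^2)\cdot\MM(n, nd/r, nd/r)\cdot(\text{lower order})$ — i.e.\ chopping the long middle and last dimensions into $r/d$ pieces each — together with $k = \Theta(r/d)$ bookkeeping iterations, yields the second term $(r^2/d)\cdot\MM(n, nd/r, nd/r)\cdot n^{o(1)}$.

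The two places that need real care — and where I expect the main obstacle to lie — are: (a) verifying that truncating everything modulo $x^r$ does not break the correctness of the ZLS lifting recursion, i.e.\ that the degree-$[jd,(j+1)d)$ coefficients of $\bfF^{-1}\bmod x^r$ genuinely depend only on the already-computed lower-degree coefficients and on $\bfF \bmod x^r$ (this is morally true because inversion modulo $x^r$ is ``causal'' in the $x$-adic filtration, but one must check the ZLS machinery — in particular its use of column reduction and of the ``high-order components'' operator $\mathsf{hc}$ — respects this truncation, and that the error terms of order $\ge r$ that we discard never feed back); and (b) getting the rectangular-matrix-multiplication accounting exactly right, namely choosing how the degree bound $d$ and the truncation order $r$ are distributed between the ``number of blocks'' and the ``block dimension'' so that the exponents match $\MM(n, nd/r, nd/r)$ rather than some other aspect ratio, and ensuring the $n^{o(1)}$ overhead genuinely absorbs the $\poly\log$ and Kronecker-substitution costs. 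I would also need to double-check the boundary regime where $d \ge r$ (then $\bfF^{-1}\bmod x^r$ is essentially a single block and the bound degenerates to $\tilde O(dn^\omega)$, consistent with the statement) and the regime $d = 1$ (then the second term is $r^2 \cdot \MM(n, n/r, n/r)$, which should be compared against the trivial $r \cdot n^\omega$ bound). Once those two points are pinned down, the rest is assembling the recursion's cost as a geometric sum dominated by its top level, which is routine.
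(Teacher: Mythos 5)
There is a genuine gap, and it sits exactly where the claimed speed-up has to come from. You model the ZLS algorithm as a high-order-lifting recursion that produces $\bfF^{-1}\bmod x^r$ in $k=\lceil r/d\rceil$ degree-$d$ slabs and then ``batch'' all $k$ residual updates into one rectangular product $\MM(n,nd,nk)$. That batching is not available: the slabs are computed sequentially, since the residual that yields the $(j{+}1)$-st slab takes the $j$-th slab as input (the coefficients of $\bfF^{-1}$ obey a length-$d$ matrix recurrence), so you cannot stack unknown slabs side by side before they exist. Done honestly, your scheme costs $\approx (r/d)\cdot\tilde{O}(dn^\omega)=\tilde{O}(rn^\omega)$ --- precisely the warm-up bound the paper states is \emph{not} good enough --- and doubling tricks do not help, because the top level then multiplies $n\times n$ matrices of degree $\Theta(r)$, again $\tilde{O}(rn^\omega)$. (Your rectangular bookkeeping is also off: splitting the middle dimension $nd$ into blocks of size $nd/r$ requires $r$ pieces, not $r/d$, so the claimed inequality $\MM(n,nd,nr/d)\le (r^2/d^2)\MM(n,nd/r,nd/r)$ does not follow from the chopping you describe.)

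The missing idea is the actual structure ZLS use, which the paper's proof relies on: a \emph{kernel basis decomposition} $\bfF^{-1}=\bfA_1\bfA_2\cdots\bfA_{\log n}\bfB^{-1}$, where $\bfA_{i+1}$ is block diagonal with $2^i$ blocks of size $(n/2^i)\times(n/2^i)$ and $\bfB$ is diagonal. Computing the decomposition itself costs only $\tilde{O}(dn^\omega)$; the expensive part is forming the prefix products $\bfM_{i+1}=\bfM_i\bfA_{i+1}$, which at level $i$ splits into $2^{i+1}$ genuinely rectangular products of shape $n\times(n/2^i)$ by $(n/2^i)\times(n/2^{i+1})$ with degrees $O(\min\{r,2^id\})$ after truncation mod $x^r$. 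Two further ingredients are needed that your outline does not supply: degree control via \emph{shifted column degrees} (ZLS Lemma 10, since the entries of $\bfM_i$ have unbalanced degrees, only their column sums are bounded by $dn$), which forces the rectangular-MM subroutine of \cref{thm:unbalanced-mat-mul} rather than plain evaluation--interpolation; and the monotonicity facts of \cref{lemma:matmul2}, which show that the level with $2^id\approx r$ dominates and yields exactly the term $(r^2/d)\cdot\MM(n,nd/r,nd/r)\cdot n^{o(1)}$. Your $\tilde{O}(dn^\omega)$ preprocessing term and the truncation-correctness concern are fine, but without the block-diagonal decomposition the second term of the bound is not reached by your route.
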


\begin{remark}
	A square matrix $\bfF$ over the commutative ring $\caR$ is invertible if and only if $\det(\bfF)$ is a unit in $\caR$. In our case where $\caR = \dbF[x] / \langle x^r\rangle$, this is true if and only if the constant term of $\det(\bfF)$ is nonzero.
\end{remark}

\begin{remark}
	The idea of using polynomial matrices to capture distances is a common technique in graph algorithms. It has found many applications in static algorithms \cite{Sankowski05}, fault-tolerant algorithms \cite{vdBS19}, and dynamic algorithms \cite{Sankowski05-dynamic, BrandN19, BrandNS19}.
\end{remark}

\paragraph{Computing consistent shortest path trees.} Our DSO needs to invoke \cite[Observation 2.1]{Ren20} (see also \cite{BernsteinK09}), which needs a \emph{consistent} set of (incoming and outgoing) shortest path trees rooted at each vertex. Here, by \emph{consistent}, we mean that for every pair of vertices $u, v$ and any two shortest path trees $T_1$ and $T_2$ (from the $2n$ trees; recall they are \emph{directed} rooted trees), if $u$ can reach $v$ in both $T_1$ and $T_2$, then the $u\To v$ paths in $T_1$ and $T_2$ are the same path. In other words, we want to specify a \emph{unique} shortest path between each pair of vertices, such that for every vertex $v$, the shortest paths starting from $v$ (or ending at $v$, respectively) form a tree.

Note that this problem is quite nontrivial in small-weighted graphs. There may be many shortest paths between two vertices, and it is not obvious how to pick one shortest path for each vertex pair while guaranteeing consistency. Also, we cannot randomly perturb the edge weights by small values, as that would break the property that edge weights are small integers. It is also unclear how to construct such a set of shortest path trees from the APSP algorithm in \cite{Zwick02}. Previously, combining ideas in \cite[Section 3.4]{DemetrescuI04} and an algorithm in \cite{DuanP09}, \cite{Ren20} showed how to compute such shortest path trees in $\tilde{O}(n^{(3+\omega)/2}M) \le O(n^{2.6865}M)$ time; unfortunately, this time bound is worse than our claimed time bound $O(n^{2.5794}M)$ in \cref{thm:main}.

In this paper, we show how to construct consistent shortest paths trees in $O(n^{2.5286}M)$ time, matching the currently best time bound for APSP \cite{Zwick02}. Below is an informal statement, see \cref{thm:breaking-tie} for the precise version.

\begin{theorem}[Informal Version]\label{thm:unique-shortest-paths}
	Given a directed graph $G=(V, E)$ with edge weights in $\{1, 2, \dots, M\}$, we can compute a set of incoming and outgoing shortest path trees rooted at each vertex that are consistent, in $O(n^{2.5286}M)$ time.
\end{theorem}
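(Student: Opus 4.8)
The goal is to compute, for each vertex $v$, an incoming and an outgoing shortest-path tree, such that all $2n$ trees agree on the unique $u \To v$ path whenever it exists. The standard way to force consistency is to impose a strict total order on paths that is preserved under taking subpaths (a "lexicographic" tie-breaking rule on the shortest paths). My plan is to first compute all-pairs shortest distances $\delta(u,v)$ using Zwick's algorithm \cite{Zwick02} in $O(n^{2.5286}M)$ time, and then to define a consistent tie-breaking rule purely in terms of these distances, so that no further expensive computation is needed. Concretely, among all shortest $u \To v$ paths, I want to single out the one that is lexicographically smallest under some fixed vertex ordering $v_1 < v_2 < \dots < v_n$; the key observation is that the first intermediate vertex on the chosen $u \To v$ path can be read off from the distances: it is the smallest-indexed vertex $w$ (in-neighbor of $v$, or more precisely a vertex with $\delta(u,w) + \mathrm{wt}(w,v) = \delta(u,v)$ when $(w,v)\in E$... ) lying on some shortest path, and then one recurses. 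Defining it from the \emph{source} side makes all outgoing trees from a fixed $u$ consistent, and defining the analogous rule from the \emph{target} side handles incoming trees; the real work is to pick a single rule that makes \emph{both} families consistent with \emph{each other}.

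**Making one rule work for all $2n$ trees.** The cleanest approach is this: fix the vertex order, and for every ordered pair $(u,v)$ with $u \ne v$ and $\delta(u,v) < \infty$, define $\mathrm{next}(u,v)$ to be the smallest-indexed vertex $w \ne u$ such that $w$ is a successor of $u$ on \emph{some} shortest $u \To v$ path, i.e.\ $(u,w) \in E$ and $\delta(u,w) + \delta(w,v) = \delta(u,v)$ (with $\delta(v,v)=0$). Then the canonical $u \To v$ path is $u, \mathrm{next}(u,v), \mathrm{next}(\mathrm{next}(u,v), v), \dots, v$. I would prove: (i) this is a well-defined shortest path (termination follows because each step strictly decreases $\delta(\cdot, v)$); (ii) \emph{prefix consistency} — if $w$ lies on the canonical $u \To v$ path then the canonical $u \To w$ path is the corresponding prefix; and (iii) \emph{suffix consistency} — the canonical $w \To v$ path is the corresponding suffix. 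Property (iii) is the subtle one: it is \emph{not} automatic from the source-side definition, because $\mathrm{next}(w,v)$ optimizes from $w$, not from $u$. The fix is a standard but careful exchange argument: the definition via "smallest-indexed admissible successor" already gives, for any $u$ on the canonical $w\To v$ path reachable... I will instead define the canonical path between $u$ and $v$ as the lexicographically-smallest shortest path under the vertex sequence order, and show $\mathrm{next}(u,v)$ computes its first step; then prefix-closedness of lex-min is immediate, and suffix-closedness follows because a suffix of a lex-min shortest path is again a shortest path, and if it weren't lex-min we could splice in the better suffix to contradict minimality of the whole. Given prefix- and suffix-consistency, fix $v$: the outgoing tree rooted at $v$ consists of all canonical $v \To u$ paths (prefix-consistency makes this a tree), the incoming tree rooted at $v$ consists of all canonical $u \To v$ paths (suffix-consistency makes this a tree), and any two of the $2n$ trees agree on shared $u\To v$ paths because they all equal the single canonical path.

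**Computing $\mathrm{next}(u,v)$ efficiently.** After running Zwick's APSP we have the full distance matrix $D = (\delta(u,v))$. For each $v$, I need, for every $u$, the smallest-indexed in... — rather: for every $u$, the smallest-indexed out-neighbor $w$ of $u$ with $\delta(u,w)+\delta(w,v) = \delta(u,v)$. Naively this is $O(n)$ per pair $(u,v)$, i.e.\ $O(n^2 \cdot \deg)$ total, which over all $v$ is $O(n^2 m) = O(n^4)$ in the worst case — too slow. The standard remedy, matching the APSP running time, is to use the same distance-product / witness-recovery machinery Zwick uses: computing $\mathrm{next}(u,v)$ for all $u,v$ is exactly a "witness for min-plus product" problem, and Zwick's algorithm already produces (or can be augmented with, via the standard $O(n^\omega \log n)$-overhead witness-finding technique of \cite{AlonGM97} / Seidel) witnesses for the relevant distance products; one then post-processes to select the \emph{minimum-indexed} witness, which costs only an extra logarithmic factor via the standard binary-search-over-index-intervals trick. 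So the total time stays $O(n^{2.5286}M)$ up to the polylog factors already hidden in Zwick's bound.

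**Main obstacle.** The conceptual content is the consistency proof (suffix-closedness of the lex-min rule); the algorithmic content is recovering minimum-index witnesses inside Zwick's recursive distance-product computation without blowing up the running time. I expect the latter to be the genuine technical obstacle: Zwick's algorithm computes distances via a sequence of rectangular min-plus products with carefully chosen dimensions, and one must argue that the witness-finding overhead is subsumed, and that restricting witnesses to edges of $G$ and then minimizing the index is compatible with the recursive structure. The detailed bookkeeping of this — rather than the tie-breaking combinatorics — is where the real care is needed, and is presumably what \cref{thm:breaking-tie} spells out precisely.
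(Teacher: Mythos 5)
Your tie-breaking rule itself is fine: the lexicographically smallest shortest path (equivalently, iterating the minimum-index admissible successor, where admissibility should be $\mathrm{wt}(u,w)+\delta(w,v)=\delta(u,v)$ for an edge $(u,w)$, not $\delta(u,w)+\delta(w,v)=\delta(u,v)$) is prefix- and suffix-consistent by the splicing argument you sketch, since with positive weights a spliced shortest walk is automatically simple. The genuine gap is the computational step, exactly where you defer to ``standard'' witness machinery. Zwick's algorithm (and the Alon--Galil--Margalit/Seidel-style witness recovery) yields \emph{some} witness for each distance product, not the \emph{minimum-index} one, and minimum-witness recovery is not a polylog-overhead add-on: the natural binary search over index intervals forces, at granularity $n/2^k$, about $2^k$ rectangular products of shape $n\times(n/2^k)\times n$, whose cost grows with $k$ up to $n^3$; the best known bound for minimum witnesses is the Kowaluk--Lingas / Shapira--Yuster--Zwick balance $\tilde{O}((n/s)\MM(n,s,n)+n^2s)=\tilde{O}(n^{2+\mu})$, and even that presupposes products with \emph{bounded} entries. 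Your $\mathrm{next}(u,v)$ is a minimum witness of the product of the adjacency matrix with the full distance matrix, whose entries reach $nM$, so applying the blocking technique directly costs an extra factor of order $nM$ and blows past $O(n^{2.5286}M)$; nothing in Zwick's recursion hands you these globally minimum-index successors for free.

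The paper escapes this by choosing a \emph{different} canonical path: $w(u,v)$ is the intermediate vertex of minimum \emph{random} label $\pi$ on any shortest $u\To v$ path (not the first edge's endpoint), and the path is defined recursively through $w(u,v)$. This buys two things your fixed lexicographic order cannot: (i) a hitting-set property (\cref{claim:breaking-tie-hitting-set}) guaranteeing that for pairs with $\|uv\|\in[r,2r)$ the witness lies in $\caH_r$ of size $\tilde{O}(Mn/r)$, and (ii) within each distance scale the matrix entries can be truncated to $O(r)$, so the blocked minimum-witness computation costs $\tilde{O}(r\cdot\MM(n,s,n)\cdot Mn/(rs)+n^2s)$ per scale --- the $r$ cancels --- giving $\tilde{O}(Mn^{2+\mu})$ overall; consistency of this rule is then a separate induction (Properties~a and~b of \cref{thm:breaking-tie}). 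Your first-step successor is confined to the out-neighborhood of $u$, so no hitting-set argument can shrink its candidate set, and the claimed running time is unsupported. To repair the proposal you would essentially have to replace the lexicographic rule by the paper's random-label-intermediate-vertex rule (or supply a genuinely new algorithm for all-pairs minimum-index shortest-path successors in $\tilde{O}(Mn^{2+\mu})$ time, which is not known).
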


\subsection{Warm-Up: DSO in $\tilde{O}(n^{(3+\omega)/2}M)$ Preprocessing Time}
Actually, the ideas in \cite{vdBS19} of maintaining the \emph{adjoint} of the \emph{symbolic adjacency matrix} (see \cref{sec:DSO}), together with ideas in \cite{Ren20}, already give us a DSO with $\tilde{O}(n^{(3+\omega)/2}M)$ preprocessing time and constant query time. As a warm-up, we briefly describe this DSO before we proceed into the details of \cref{thm:main}.

An \emph{$r$-truncated DSO} \cite{Ren20} is a DSO that only needs to be correct for the queries $(u, v, f)$ whose answer (i.e.~length of the corresponding shortest path) is at most $r$. If the answer is greater than $r$, it should return $r$ instead. In what follows, we will describe how to construct an $r$-truncated DSO in $\tilde{O}(rn^\omega)$ preprocessing time and $\tilde{O}(r)$ query time. Using techniques in \cite{Ren20} (see also \cref{sec:full-DSO}), this implies a DSO with $\tilde{O}(n^{(3+\omega)/2}M)$ preprocessing time and constant query time.

Let $\dbF$ be a sufficiently large finite field, and $\bfA$ be the following matrix. For every vertices $u, v$, if there is an edge from $u$ to $v$ with weight $l$, then let $\bfA_{u, v} = a_{u, v}x^l$, where $a_{u, v}$ is a random element in $\dbF$, and $x$ is an indeterminate. Furthermore, for every vertex $v$, let $\bfA_{v, v} = 1$. It is well-known \cite{Sankowski05} that with high probability over the choices of $a_{u, v}$, the \emph{adjoint} matrix of $\bfA$ encodes the shortest path information of the input graph, as follows. Let $\adj(\bfA)$ be the adjoint matrix of $\bfA$, and $u, v$ be two vertices, then the lowest degree of $\adj(\bfA)_{u, v}$ is exactly the distance from $u$ to $v$. For example, if $\adj(\bfA)_{u, v} = 7x^8 + 6x^5 - 9x^4$, then the distance from $u$ to $v$ is $4$.

A big advantage of the adjoint matrix, exploited in \cite{vdBS19} and also this work, is that it is easy to perform \emph{low-rank} updates, by the Sherman-Morrison-Woodbury formula (see \cref{thm:SMW-formula}). Given a matrix $\bfA$, its adjoint $\adj(\bfA)$, and a low-rank matrix $\bfB$, we can compute a specific element of $\adj(\bfA + \bfB)_{u, v}$, in time much faster than brute force. Therefore, we answer a query $(u, v, f)$ as follows: We first express the failure as a \emph{rank-one} matrix $\bfF$, such that $\bfA + \bfF$ is the matrix corresponding to the graph with $f$ removed. Then we can compute $\adj(\bfA + \bfF)_{u, v}$ quickly. Given this element (a polynomial over $\dbF$), we can easily compute the answer to the query.

What is the time complexity of this DSO? Recall that we only want to construct an $r$-truncated DSO, so we can modulo every entry in the process of computing $\adj(\bfA)$ by the polynomial $x^r$. Every arithmetic operation in the commutative ring $\dbF[x] / \langle x^r\rangle$ only takes $\tilde{O}(r)$ time. Computing the adjoint of a matrix reduces to inverting that matrix, which takes $\tilde{O}(n^\omega)$ arithmetic operations \cite{MatInv}. Therefore it takes $\tilde{O}(rn^\omega)$ time to compute $\adj(\bfA)\bmod x^r$. A close inspection of the Sherman-Morrison-Woodbury formula shows that each query can be completed in $O(1)$ arithmetic operations, i.e.~$\tilde{O}(r)$ time.

The $\tilde{O}(rn^\omega)$-time algorithm for inverting a polynomial matrix modulo $x^r$ is not optimal; the time bound in \cref{thm:invert-algo} is better. In \cref{sec:invert-poly-matrix}, we use fast rectangular matrix multiplication algorithms to speed up the algorithm in \cite{ZhouLS15}, obtaining a faster algorithm for inverting polynomial matrices modulo $x^r$.

\section{Preliminaries}\label{sec:preliminaries}

In this paper, we say an event happens \emph{with high probability} (w.h.p.) if it happens with probability at least $1-1/n^c$, for a constant $c$ that can be made arbitrarily large. Our DSOs (or $r$-truncated DSOs) will have a randomized preprocessing algorithm and a deterministic query algorithm. We say a DSO is \emph{correct with high probability} if w.h.p.~over its (randomized) preprocessing algorithm, it answers every possible query $(u, v, f)$ correctly.

\paragraph{Notation.} We use the following notation in \cite{DuanP09a, Ren20}.\begin{itemize}
	\item Let $p$ be a path, we use $|p|$ to denote the number of edges in $p$, and use $\|p\|$ to denote the length of $p$ (i.e.~total weight of edges in $p$).
	\item Let $u, v$ be two vertices, we define $\|uv\|$ as the length of the shortest path from $u$ to $v$. Furthermore, let $f$ be a failure (which is either an edge or a vertex), we define $\|uv\diamond f\|$ as the length of the shortest path from $u$ to $v$ that does not go through $f$.
	\item Let $u, v$ be two vertices, we define $|uv|$ as the number of edges in the shortest path from $u$ to $v$. In the case that there are many shortest paths from $u$ to $v$, it turns out that the following definition will be convenient in \cref{sec:breaking-tie}: We define $|uv|$ as the \emph{largest} number of edges in any shortest path from $u$ to $v$. %
\end{itemize}

\paragraph{Fast matrix multiplication.} Let $\omega$ be the exponent of matrix multiplication; the current best upper bound is $\omega < 2.3728596$ \cite{AlmanW21}. For positive integers $n_1, n_2, n_3$, let $\MM(n_1, n_2, n_3)$ denote the minimum number of arithmetic operations needed to multiply an $n_1\times n_2$ matrix and an $n_2\times n_3$ matrix. We define $\omega(a, b, c)$ to be the exponent of multiplying an $n^a\times n^b$ matrix and an $n^b\times n^c$ matrix, i.e.
\[\omega(a, b, c) = \inf\{w : \MM(n^a, n^b, n^c) = O(n^w)\}.\]
It is a classical result that $\omega(1, 1, \lambda) = \omega(1, \lambda, 1) = \omega(\lambda, 1, 1)$ for any real number $\lambda > 0$ \cite{LottiR83}; we denote $\omega(\lambda) = \omega(1, 1, \lambda)$.

We will need the following lemmas about the exponent of rectangular matrix multiplication. For completeness, we include proofs for these lemmas in \cref{sec:apd-FMM}.
\begin{restatable}{lemma}{MatMulI}\label{lemma:matmul1}
	Let $a, b, c, r$ be positive real numbers, then $r+\omega(a, b, c) \le \omega(a, b+r, c+r)$.
\end{restatable}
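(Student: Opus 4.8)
The plan is to realize one multiplication of an $n^a\times n^{b+r}$ matrix by an $n^{b+r}\times n^{c+r}$ matrix as $n^r$ \emph{independent} multiplications of $n^a\times n^b$ matrices by $n^b\times n^c$ matrices, and then use that $n^r$ such independent products cannot be carried out faster than $n^r$ times a single one. For the reduction, write the left operand as a block row $A=[A_1\mid\cdots\mid A_{n^r}]$ with each $A_i\in\dbF^{n^a\times n^b}$, and specialize the right operand $B\in\dbF^{n^{b+r}\times n^{c+r}}$ to be block-diagonal with $n^r$ diagonal blocks $B_1,\dots,B_{n^r}\in\dbF^{n^b\times n^c}$ and all remaining blocks zero; this is consistent because $n^{b+r}=n^r\cdot n^b$ and $n^{c+r}=n^r\cdot n^c$. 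Then the product, read as a block row, is $AB=[A_1B_1\mid\cdots\mid A_{n^r}B_{n^r}]$, so the $(n^a,n^{b+r},n^{c+r})$-product specializes to the $n^r$ products $A_1B_1,\dots,A_{n^r}B_{n^r}$, which involve pairwise disjoint sets of input entries and produce pairwise disjoint sets of output entries.

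To finish, note that this specialization exhibits the matrix-multiplication tensor $\langle n^a,n^{b+r},n^{c+r}\rangle$ as having a restriction equal to the direct sum of $n^r$ copies of $\langle n^a,n^b,n^c\rangle$; since asymptotic tensor rank $\tilde R$ is monotone under restriction,
\[ \tilde R\Bigl(\bigoplus\nolimits_{n^r}\langle n^a,n^b,n^c\rangle\Bigr)\ \le\ \tilde R\bigl(\langle n^a,n^{b+r},n^{c+r}\rangle\bigr). \]
Now I would invoke the additivity of asymptotic rank over a direct sum of identical tensors, $\tilde R\bigl(\bigoplus_k T\bigr)=k\cdot\tilde R(T)$, together with the fact that $\omega(a,b,c)$ is simultaneously the growth exponent (in $n$) of $\MM(n^a,n^b,n^c)$ and of $\tilde R(\langle n^a,n^b,n^c\rangle)$. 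This gives $\MM(n^a,n^{b+r},n^{c+r})\ge n^{\,r+\omega(a,b,c)-o(1)}$, that is, $r+\omega(a,b,c)\le\omega(a,b+r,c+r)$. When $a=b=c$ the general additivity statement is not even needed: Schönhage's asymptotic sum inequality applied to $\bigoplus_{n^r}\langle n^a,n^a,n^a\rangle$ is tight and yields the bound directly. For unbalanced $a,b,c$ the clean justification of $\tilde R\bigl(\bigoplus_k\langle n^a,n^b,n^c\rangle\bigr)=k\cdot\tilde R\bigl(\langle n^a,n^b,n^c\rangle\bigr)$ is via the theory of the asymptotic spectrum of tensors (where a direct sum of identical copies is manifestly $k$ times one copy, since every spectral functional is additive), or one simply cites the additivity of the rectangular matrix-multiplication exponent under direct sums from the literature.

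The one delicate point is exactly this additivity step. It is tempting to say that $n^r$ products on disjoint inputs with disjoint outputs ``obviously'' require $n^r$ times the work of one; but at the level of bilinear complexity (tensor rank) this is essentially Strassen's additivity conjecture and must not be taken for granted. The clean way around it is to argue at the level of the exponent, equivalently of asymptotic rank, where additivity over a direct sum of identical summands---which is all that is needed here---holds unconditionally. The block decomposition itself and the bookkeeping of which matrix entries feed which outputs are entirely routine.
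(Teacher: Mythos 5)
Your proof is correct, but it takes a genuinely different route from the paper's. The common starting point is the same block-diagonal specialization: zeroing the off-diagonal blocks of the right factor exhibits $n^r$ independent $n^a\times n^b\times n^c$ products, with pairwise disjoint inputs and outputs, inside one $n^a\times n^{b+r}\times n^{c+r}$ product. Where you diverge is in how you convert "contains $n^r$ disjoint copies" into the exponent inequality. You go through tensor machinery: you pass to asymptotic rank, use monotonicity under restriction, and then need additivity of $\tilde R$ over a direct sum of identical summands, which you justify via Strassen's asymptotic spectrum (or, in the square case, Sch\"onhage's asymptotic sum inequality), together with the standard equivalence between the exponent of $\MM(n^a,n^b,n^c)$ and that of the (asymptotic) rank of the corresponding matrix-multiplication tensor. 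You are right to flag that plain rank additivity is Strassen's additivity conjecture and cannot be assumed, and your detour around it is sound; but it leans on substantial external results. The paper (following Bl\"aser's Lemma 7.7) avoids lower-bound/additivity considerations entirely by an elementary amortization: letting $g(n)=\MM(n^a,n^{b+r},n^{c+r})$, one call to the big algorithm computes $n^r$ small products at once, and applying this recursively over block matrices shows that $n^r$ products of size $n^{ka}\times n^{kb}\times n^{kc}$ cost at most $\lceil g(n)/n^r\rceil^k\cdot n^r$ operations; letting $k\to\infty$ gives $\omega(a,b,c)\le\log_n\lceil g(n)/n^r\rceil$ directly, i.e.\ an \emph{upper bound} argument with no appeal to tensors. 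So your approach buys generality and conceptual clarity at the price of heavy citations, while the paper's recursion is self-contained and needs only the same block decomposition you already wrote down.
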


\begin{restatable}{lemma}{MatMulII}\label{lemma:matmul2}
	Consider the function $f(\tau) = \omega(1, 1-\tau, 1-\tau)$, where $\tau \in [0, 1]$. Then $\tau+f(\tau)$ is monotonically non-increasing in $\tau$, and $2\tau + f(\tau)$ is monotonically non-decreasing in $\tau$.
\end{restatable}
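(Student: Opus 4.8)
The plan is to reduce both monotonicity claims to two elementary inequalities for the rectangular exponent via the substitution $t = 1-\tau$. Writing $g(t) = \omega(1,t,t)$, one has $\tau + f(\tau) = 1 + (g(t)-t)$ and $2\tau + f(\tau) = 2 + (g(t)-2t)$; since $t$ decreases as $\tau$ increases, it suffices to prove that $g(t)-t$ is non-decreasing on $[0,1]$ and that $g(t)-2t$ is non-increasing on $[0,1]$. Equivalently, I want to show that for all $0 \le a \le b \le 1$,
\[ g(a) + (b-a) \ \le\ g(b)\ \le\ g(a) + 2(b-a). \]

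For the left-hand inequality I would invoke \cref{lemma:matmul1} directly: applying it with parameters $(1,a,a,b-a)$ (using $b-a>0$) gives $(b-a)+\omega(1,a,a)\le \omega(1,b,b)$, which is exactly $g(a)+(b-a)\le g(b)$. The degenerate endpoint $a=0$ can be handled on its own (or by the continuity of the exponent function): $g(0)=1$, and $g(b)=\omega(1,b,b)\ge 1+b$ simply because the product of an $n\times n^b$ and an $n^b\times n^b$ matrix has $n^{1+b}$ entries, all of which a generic algorithm must output.

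For the right-hand inequality I would use the standard column/row-blocking of a rectangular product. To multiply an $n\times n^b$ matrix $A$ by an $n^b\times n^b$ matrix $B$, partition the $n^b$ columns of $A$ into $n^{b-a}$ blocks of $n^a$ columns and partition $B$ into an $n^{b-a}\times n^{b-a}$ grid of $n^a\times n^a$ blocks; the output is assembled from the $n^{2(b-a)}$ sub-products of shape $(n\times n^a)\cdot(n^a\times n^a)$, each costing $O(n^{\omega(1,a,a)+\varepsilon})$, plus $O(n^{2(b-a)+1+a})$ additions, which is dominated since $\omega(1,a,a)\ge 1+a$. Letting $\varepsilon\to 0$ yields $\omega(1,b,b)\le 2(b-a)+\omega(1,a,a)$, i.e.\ $g(b)\le g(a)+2(b-a)$.

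The only real care needed will be bookkeeping rather than a genuine obstacle: one must track the orientation flip introduced by $t=1-\tau$ so that ``non-increasing in $\tau$'' correctly translates to ``non-decreasing in $t$'', and one must phrase the blocking argument with an arbitrarily small slack $\varepsilon$ to respect the $n^{o(1)}$ looseness in the definition of $\omega(\cdot)$ as an infimum. All the content sits in the two one-line inequalities above — the first borrowed from \cref{lemma:matmul1}, the second from the block-decomposition identity for rectangular matrix multiplication.
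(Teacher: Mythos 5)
Your proposal is correct and follows essentially the same route as the paper: the non-increasing direction of $\tau+f(\tau)$ comes from \cref{lemma:matmul1}, and the non-decreasing direction of $2\tau+f(\tau)$ comes from the same column/row-blocking of the larger rectangular product into $n^{2(\tau_2-\tau_1)}$ smaller ones; your substitution $t=1-\tau$ is only a reparametrization of the paper's argument. Your extra care at the endpoint $t=0$ (where \cref{lemma:matmul1} formally requires positive parameters) is a harmless refinement the paper glosses over.
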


\paragraph{Polynomial operations.} Let $p, q\in\dbF[x]$ be two polynomials of degree $d$. It is easy to compute $p+q$ or $p-q$ in $O(d)$ field operations. We can also compute $p\cdot q$ in $\tilde{O}(d)$ field operations using fast Fourier transform. (Here, $\tilde{O}$ hides $\operatorname{polylog}(d)$ factors.) When $p$ is invertible, it is also possible to compute $p^{-1} \bmod x^d$ in $\tilde{O}(d)$ field operations \cite[Section 8.3]{AhoHU74}.

\section{Constructing a DSO in $O(n^{2.5794}M)$ Time}\label{sec:DSO}

In this section, we show how to preprocess a distance sensitivity oracle in $O(n^{2.5794}M)$ time, such that every query can be answered in constant time. Our preprocessing algorithm is randomized; with high probability over the preprocessing algorithm, the query algorithm always returns the correct answer.

\subsection{Preliminaries}

First, our preprocessing algorithm will use the following algorithm for inverting a polynomial matrix. A detailed description of this algorithm will be given in \cref{sec:invert-poly-matrix}.

\ThmInvertAlgo*

Let $G$ be a directed graph whose edge weights are integers in $[1, M]$. We define its \emph{symbolic adjacency matrix} $\SA(G)$ as (see \cite{Sankowski05})
\[\SA(G)_{i, j} = \begin{cases}
	1 & \text{if $i = j$},\\
	z_{i, j}x^{l} & \text{if there is an edge from $i$ to $j$ with weight $l$ in $G$},\\
	0 & \text{otherwise},
\end{cases}\]
where $z_{i, j}$ are unique variables corresponding to edges of $G$.

It will be inefficient to deal with these variables $z_{i, j}$, therefore we will pick a suitably large field $\dbF$, and substitute each variable $z_{i, j}$ by a random element in $\dbF$. However, we still keep the indeterminate $x$. Now, let $\bfZ$ be a matrix where each $\bfZ_{i, j} \in \dbF$, we will use $\SA_{\bfZ}(G)$ to denote the matrix $\SA(G)$ with each formal variable $z_{i, j}$ substituted by the field element $\bfZ_{i, j}$. Note that $\SA_{\bfZ}(G)$ is a polynomial matrix where every entry is a polynomial over $x$ with degree at most $M$.

We recall the definition of \emph{adjoint} matrix that will be crucial to our algorithm. Let $\bfA$ be an $n\times n$ matrix over a commutative ring $\caR$, and $i, j\in [n]$. We denote by $\bfA^{i, j}$ the matrix $\bfA$ with every element in the $i$-th row and the $j$-th column set to zero, except that $(\bfA^{i, j})_{i, j} = 1$. The adjoint matrix of $\bfA$, denoted as $\adj(\bfA)$, is an $n\times n$ matrix such that $\adj(\bfA)_{i, j} = \det(\bfA^{j, i})$ for every $i, j\in[n]$. A basic fact about $\adj(\bfA)$ is that if $\det(\bfA)$ is a unit of $\caR$, then $\adj(\bfA) = \det(\bfA) \cdot \bfA^{-1}$.

There is a close relationship between the distances in the graph $G$ and the entries in the adjoint of $\SA(G)$. Let $p$ be a multivariate polynomial, we define $\deg^*_x(p)$ as the lowest degree of the variable $x$ in any monomial of $p$. If $p=0$, then we define $\deg^*_x(p):=+\infty$. We have:

\begin{theorem}[{\cite[Lemma 4]{Sankowski05}}]\label{thm:Sankowski-adjoint}
	Let $G$ be a directed graph with positive integer weights, $i,j$ be two vertices. Then the distance from $i$ to $j$ in $G$ is $\deg^*_x(\adj(\SA(G))_{i, j})$.
\end{theorem}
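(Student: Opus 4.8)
The plan is to establish a bijective-style correspondence between monomials in $\adj(\SA(G))_{i,j}$ and walks from $j$ to $i$ in $G$, and then argue that the shortest walk dominates the lowest-degree term. Recall $\adj(\SA(G))_{i,j} = \det((\SA(G))^{j,i})$, where $(\SA(G))^{j,i}$ is $\SA(G)$ with row $j$ and column $i$ zeroed except for a $1$ in position $(j,i)$. First I would expand this determinant via the permutation (Leibniz) formula: $\det(\bfM) = \sum_{\sigma \in S_n} \sign(\sigma) \prod_{k=1}^n \bfM_{k,\sigma(k)}$. Each permutation $\sigma$ decomposes into disjoint cycles; because every diagonal entry of $\SA(G)$ is $1$, fixed points contribute nothing to the degree in $x$, so the $x$-degree of a term is governed purely by the nontrivial cycles. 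The key structural observation is that, after accounting for the special row/column $j$ and $i$, a nonzero term corresponds to choosing a simple path from $j$ to $i$ in $G$ together with a collection of vertex-disjoint simple cycles on the remaining vertices; the $x$-degree of the resulting monomial equals the total weight of this path plus the total weight of all the cycles.

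Next I would argue the lower bound $\deg^*_x(\adj(\SA(G))_{i,j}) \ge \mathrm{dist}(i,j)$: every monomial appearing in the expansion has $x$-degree at least the weight of some $j\to i$ path (since cycle weights are nonnegative — in fact positive — they only increase the degree), hence at least $\mathrm{dist}(j,i)$. Wait — I should be careful about direction: $\adj(\bfA)_{i,j} = \det(\bfA^{j,i})$, and the path extracted runs from $j$ to $i$ in the matrix-index sense, which corresponds to a path from $i$ to $j$ in $G$ if we track how $\SA(G)_{a,b}$ encodes the edge $a\to b$. So the degree is at least $\mathrm{dist}(i,j)$ in $G$, matching the statement. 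For the upper bound $\deg^*_x(\adj(\SA(G))_{i,j}) \le \mathrm{dist}(i,j)$, I would take an actual shortest path $P$ from $i$ to $j$ in $G$; since weights are positive, $P$ is simple. Setting $\sigma$ to be the permutation that is the single cycle corresponding to $P$ (with $i$ and $j$ linked by the planted $1$ entry) and the identity elsewhere gives a single term of $x$-degree exactly $\|P\| = \mathrm{dist}(i,j)$. The remaining danger is cancellation: other permutations might produce a monomial of the same minimal $x$-degree and cancel it.

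The main obstacle is therefore ruling out this cancellation, and this is exactly where the randomization of the $\bfZ_{i,j}$ (or, in the symbolic version, the distinctness of the variables $z_{i,j}$) enters. In the purely symbolic setting of this theorem, each edge $a\to b$ carries its own variable $z_{a,b}$, so the monomial $\prod z_{a,b}$ associated to a given path-plus-cycles configuration is distinct from that of any other configuration (different edge sets give different multisets of variables); hence the coefficient of the minimal-$x$-degree term coming from the shortest path $P$ is a nonzero polynomial in the $z$'s — indeed it is $\pm 1$ times that monomial and cannot be cancelled. This gives $\deg^*_x = \mathrm{dist}(i,j)$ on the nose. (When one later substitutes random field elements for the $z$'s, one invokes Schwartz–Zippel to argue the relevant coefficient polynomial stays nonzero w.h.p., but that is a separate step not needed for this theorem as stated.) I would finish by noting the edge case: if $i$ cannot reach $j$, there is no $j\to i$ configuration, every term vanishes, $\adj(\SA(G))_{i,j}=0$, and $\deg^*_x(0)=+\infty=\mathrm{dist}(i,j)$, consistent with the convention fixed just before the statement.
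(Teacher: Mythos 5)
Your proposal is correct; the paper itself gives no proof of this statement but imports it as \cite[Lemma 4]{Sankowski05}, and your argument --- expanding $\det(\SA(G)^{j,i})$ via the Leibniz formula, identifying each nonzero term with a simple path from $i$ to $j$ in $G$ together with vertex-disjoint cycles on the remaining vertices, lower-bounding every term's $x$-degree by the distance since cycle weights are positive, and observing that the shortest-path configuration yields a $z$-monomial determined by its edge set and hence shared by no other configuration, so it cannot cancel --- is essentially the standard proof of that lemma. The only minor points are the direction slip (which you caught and corrected yourself) and the unstated but trivial case $i=j$, where the identity permutation already gives a degree-$0$ term.
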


We need the following theorem that allows us to maintain the adjoint of a matrix under \emph{rank-$1$} queries. (This theorem is a special case of \cite[Lemma 1.6]{vdBS19}.)

\begin{theorem}\label{thm:SMW-formula}
	Let $\caR$ be an arbitrary commutative ring, $\bfA\in \caR^{n\times n}$ be an invertible matrix, $\bfu, \bfv\in \caR^n$ be column vectors, and $\gamma = 1+\bfv^\sfT\bfA^{-1}\bfu$. Suppose $\gamma$ is invertible, then $\bfA+\bfu\bfv^\sfT$ is also invertible, and
	\[\adj(\bfA+\bfu\bfv^\sfT) = \det(\bfA)(\gamma\bfA^{-1} - (\bfA^{-1}\bfu\bfv^\sfT\bfA^{-1})).\]
	\vspace{-1.5em}
\end{theorem}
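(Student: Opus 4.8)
The plan is to deduce the claimed formula from its two classical companions — the Sherman--Morrison inverse formula and the matrix determinant lemma — proved directly over the commutative ring $\caR$, and then to glue them together using the identity $\adj(\bfC) = \det(\bfC)\,\bfC^{-1}$, valid whenever $\det(\bfC)$ is a unit of $\caR$.

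First I would set $\bfB := \bfA^{-1} - \gamma^{-1}\bfA^{-1}\bfu\bfv^\sfT\bfA^{-1}$, which is well defined because $\bfA$ and $\gamma$ are invertible, and verify by a one-line expansion that $(\bfA + \bfu\bfv^\sfT)\bfB = \bfI$. The only input is the scalar identity $\bfv^\sfT\bfA^{-1}\bfu = \gamma - 1$ (just the definition of $\gamma$, viewing the $1\times 1$ product as an element of $\caR$); after substituting it, the three off-identity terms collect into the coefficient $-\gamma^{-1} + 1 - \gamma^{-1}(\gamma-1) = 0$ and vanish. Since $\caR$ is commutative, a square matrix over $\caR$ with a one-sided inverse is automatically invertible (apply $\det$: from $\det(\bfA+\bfu\bfv^\sfT)\det(\bfB) = 1$ one sees $\det(\bfA+\bfu\bfv^\sfT)$ is a unit), so this already shows $\bfA+\bfu\bfv^\sfT$ is invertible with $(\bfA+\bfu\bfv^\sfT)^{-1} = \bfB$.

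Next I would prove the matrix determinant lemma $\det(\bfA+\bfu\bfv^\sfT) = \gamma\det(\bfA)$ by computing $\det\bfM$ for the $(n+1)\times(n+1)$ block matrix $\bfM = \left(\begin{smallmatrix}\bfA & -\bfu\\ \bfv^\sfT & 1\end{smallmatrix}\right)$ in two ways. Both computations use the block-triangular factorization of $\bfM$ relative to one of its two pivot blocks ($\bfA$, or the scalar $1$), each of which is invertible; over a commutative ring the determinant of a block-triangular matrix is the product of the determinants of its diagonal blocks, so each factorization yields the corresponding Schur-complement identity. Eliminating the top-left block gives $\det\bfM = \det(\bfA)\,(1 + \bfv^\sfT\bfA^{-1}\bfu) = \gamma\det(\bfA)$, while eliminating the bottom-right block gives $\det\bfM = \det(\bfA+\bfu\bfv^\sfT)$. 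Hence $\det(\bfA+\bfu\bfv^\sfT) = \gamma\det(\bfA)$, which is a unit of $\caR$ as a product of units.

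Finally I would assemble the pieces: applying $\adj(\bfC) = \det(\bfC)\,\bfC^{-1}$ to $\bfC = \bfA + \bfu\bfv^\sfT$,
\[\adj(\bfA+\bfu\bfv^\sfT) = \det(\bfA+\bfu\bfv^\sfT)\,(\bfA+\bfu\bfv^\sfT)^{-1} = \gamma\det(\bfA)\,\bfB = \det(\bfA)\bigl(\gamma\bfA^{-1} - \bfA^{-1}\bfu\bfv^\sfT\bfA^{-1}\bigr),\]
which is exactly the claimed identity. I do not expect a genuine obstacle here; the only care needed is that $\caR$ is an arbitrary commutative ring rather than a field, which affects only the two bookkeeping points above — that a one-sided inverse of a square matrix is two-sided, and that the Schur-complement determinant identities hold — both of which follow from multiplicativity of $\det$ over commutative rings. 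Alternatively, one could sidestep even this by first proving the identity generically over the polynomial ring $\dbZ[\{\text{entries of }\bfA,\bfu,\bfv\}]$ localized at $\det(\bfA)$ and $\gamma$, where it becomes a bona fide polynomial identity, and then specializing along the unique ring homomorphism to $\caR$, which is legitimate precisely because $\det(\bfA)$ and $\gamma$ are sent to units. Since the statement is in any case a special case of \cite[Lemma 1.6]{vdBS19}, a short self-contained argument of this kind suffices.
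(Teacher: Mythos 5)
Your proposal is correct and follows essentially the same route as the paper's proof sketch: combine the Sherman--Morrison inverse formula with the matrix determinant lemma and multiply them via $\adj(\bfC)=\det(\bfC)\,\bfC^{-1}$. The only difference is that the paper cites these two classical identities while you verify them directly over a commutative ring, which is a fine (and slightly more self-contained) way to fill in the same argument.
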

\begin{proof}[Proof Sketch]
	By the matrix determinant lemma, we have 
	\[\det(\bfA+\bfu\bfv^\sfT) = \gamma\cdot \det(\bfA).\]
	Since $\gamma$ is invertible, we can use the Sherman-Morrison-Woodbury formula \cite{ShermanM50, Woodbury50}:
	\[(\bfA+\bfu\bfv^\sfT)^{-1} = \bfA^{-1} - \gamma^{-1}(\bfA^{-1}\bfu\bfv^\sfT\bfA^{-1}).\]
	The theorem is proved by multiplying the above two formulas together.
\end{proof}

We need the Schwartz-Zippel lemma that guarantees the correctness of our randomized algorithm.

\begin{theorem}[{Schwartz-Zippel Lemma, \cite{Schwartz80, Zippel79}}]\label{thm:schwartz-zippel}
	Let $p(x_1,x_2,\dots,x_m)$ be a non-zero polynomial of (total) degree $d$ over a field $\dbF$. Let $S$ be a finite subset of $\dbF$, and $r_1,r_2,\dots,r_m$ be independently and uniformly sampled from $S$. Then
	\[\Pr[p(r_1,r_2,\dots,r_m) = 0] \le \frac{d}{|S|}.\]
\end{theorem}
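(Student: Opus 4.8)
The plan is to prove this by induction on the number of variables $m$. For the base case $m=1$, the polynomial $p$ is a non-zero univariate polynomial of degree at most $d$, so it has at most $d$ roots in $\dbF$; since $r_1$ is drawn uniformly from $S$, the probability that $p(r_1)=0$ is at most $d/|S|$.

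For the inductive step I would single out the variable $x_1$ and write
$p(x_1, x_2, \dots, x_m) = \sum_{i=0}^{k} x_1^i \, p_i(x_2, \dots, x_m)$,
where $k \le d$ is the highest power of $x_1$ that actually occurs in $p$, so that $p_k$ is a non-zero polynomial in $m-1$ variables of total degree at most $d-k$. Let $A$ be the event that $p_k(r_2, \dots, r_m) = 0$. Applying the induction hypothesis to $p_k$ gives $\Pr[A] \le (d-k)/|S|$. Conditioned on the complement $\overline{A}$, the univariate polynomial $q(x_1) := p(x_1, r_2, \dots, r_m)$ has degree exactly $k$ in $x_1$ and is therefore non-zero; since $r_1$ is sampled independently of $r_2, \dots, r_m$ and is still uniform on $S$ after this conditioning, the conditional probability that $q(r_1)=0$ is at most $k/|S|$.

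Combining the two cases by a union bound,
$\Pr[p(r_1, \dots, r_m) = 0] \le \Pr[A] + \Pr[\,p = 0 \mid \overline{A}\,] \le (d-k)/|S| + k/|S| = d/|S|$,
which completes the induction.

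There is no real obstacle here; the only point needing a little care is that, after fixing $r_2, \dots, r_m$, the remaining randomness $r_1$ is still uniform on $S$ by independence of the samples, so the univariate root bound applies conditionally. One should also note the degenerate sub-case where $p_k$ is a non-zero constant: then $A$ is the empty event and the bound reduces directly to $k/|S| \le d/|S|$.
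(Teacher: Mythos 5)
Your proof is correct and complete: it is the standard induction on the number of variables, and the two points you single out (that $r_1$ remains uniform on $S$ after conditioning on an event determined by $r_2,\dots,r_m$ alone, and the degenerate case where $p_k$ is a nonzero constant) are exactly where care is needed. Note that the paper itself does not prove this statement; it cites it as a known result of Schwartz and Zippel, so there is no in-paper argument to compare against.
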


We also need the following algorithm that computes the determinant of a polynomial matrix.

\begin{theorem}[\cite{Storjohann03,labahn2017fast}]\label{lem:det}
	Let $\bfB\in\dbF[x]^{n\times n}$ be a matrix of degree at most $d$, then we can compute $\det(\bfB)$ in $\tilde{O}(dn^\omega)$ field operations.
\end{theorem}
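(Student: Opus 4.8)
The plan is to reduce the problem to fast structured linear algebra over $\dbF[x]$, following \cite{Storjohann03,labahn2017fast}. As a baseline that explains the difficulty: $\det(\bfB)$ is a polynomial of degree at most $nd$, so one could evaluate $\bfB$ at $nd+1$ points of $\dbF$ (passing to a $\tilde O(1)$-degree extension if $\dbF$ is too small), compute the $nd+1$ scalar determinants in $O(n^\omega)$ operations each, and interpolate; this costs $\tilde O(n^{\omega+1}d)$, a factor of $\tilde O(n)$ too much. The whole point is to avoid paying $n^\omega$ once per unit of degree of the answer. As a first genuine step I would reduce to the case where the constant-coefficient matrix $\bfB(0)$ is invertible (equivalently, $\bfB$ is invertible as a matrix of formal power series): since $\bfB$ is nonsingular over $\dbF[x]$, $\det(\bfB)$ is a nonzero polynomial of degree at most $nd$, so for uniformly random $\alpha$ (in a small extension of $\dbF$ if needed) we have $\det(\bfB(\alpha))\ne 0$ w.h.p.\ by \cref{thm:schwartz-zippel}; replacing $\bfB(x)$ by $\bfB(x+\alpha)$ (a Taylor shift of an $n\times n$ degree-$d$ matrix, $\tilde O(n^2 d)$ time) we may assume $\bfB(0)$ invertible, and at the end we Taylor-shift the computed polynomial $\det(\bfB(x+\alpha))$ back by $-\alpha$ in $\tilde O(nd)$ operations.

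The main approach is to compute a triangular matrix $\bfT=\bfB\,\mathbf{U}$ with $\mathbf{U}\in\dbF[x]^{n\times n}$ unimodular (so $\det(\mathbf{U})\in\dbF^\times$): then $\det(\bfB)=\det(\mathbf{U})^{-1}\prod_{i=1}^n\bfT_{i,i}$, and since $\det(\mathbf{U})$ has degree $0$ we have $\sum_i\deg\bfT_{i,i}=\deg\det(\bfB)\le nd$, so once $\bfT$ and $\det(\mathbf{U})$ are known, forming $\prod_i\bfT_{i,i}$ costs only $\tilde O(nd)$ operations. To compute $\bfT$ I would use a divide-and-conquer on the column space along these lines: split the columns of $\bfB$ into two halves, compute a reduced \emph{column basis} of the left half and a \emph{kernel basis} describing how the right half projects onto the complement, recurse on the two resulting (roughly $n/2$-column) subproblems, and recombine into triangular form while tracking the accumulated unimodular multiplier. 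The column-basis and kernel-basis subroutines all reduce to computing \emph{minimal approximant bases} of $O(n)\times O(n)$ matrices at order $O(d)$, which the Giorgi--Jeannerod--Villard PM-Basis algorithm does in $\tilde O(n^\omega d)$ field operations.

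The crux — and the step I expect to be the main obstacle — is \emph{degree control}: one must maintain the invariant that the total of the column degrees of every matrix appearing in the recursion stays $O(nd)$, and that the two subproblems at each node are genuinely balanced (each a $k$-column problem of degree-sum $O(kd)$, cost $\tilde O(k^\omega d)$). Under this invariant the level-$j$ cost is $2^j\cdot\tilde O((n/2^j)^\omega d)=n^\omega 2^{j(1-\omega)}\tilde O(d)$, a decreasing geometric series over the $O(\log n)$ levels, so the total telescopes to $\tilde O(n^\omega d)$. Establishing this invariant — i.e.\ that the column bases and kernel bases produced really do have small (average) degree so the recursion stays balanced — is precisely what the analyses of \cite{labahn2017fast} (deterministically) and \cite{Storjohann03} (via randomized high-order lifting) are engineered to do; by contrast, naive fraction-free triangularization à la Bareiss blows the intermediate degrees up to $\Theta(nd)$ and costs $\tilde O(n^{\omega+1}d)$ or worse.

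As an alternative route I would use Storjohann's high-order lifting directly: with $\bfB(0)$ invertible one can compute $O(\log n)$ carefully chosen degree-$O(d)$ "windows'' of the power series $\bfB^{-1}=\det(\bfB)^{-1}\adj(\bfB)$ via a doubling recursion that only ever multiplies $n\times n$ polynomial matrices of degree $O(d)$ — hence $\tilde O(n^\omega d)$ in total — and then reconstruct $\det(\bfB)$ from these windows. The subtlety here is that rational reconstruction applied to a single entry (or a random projection $\bfB^{-1}\bfv$) recovers only the last Smith invariant factor $s_n$, not the full product $\prod_i s_i=\det(\bfB)$; this is handled by also computing the Smith normal form in Storjohann's framework (again $\tilde O(n^\omega d)$), after which $\det(\bfB)=c\prod_i s_i$ for a known unit $c$ and the product of the $n$ invariant factors (total degree $\le nd$) is a final $\tilde O(nd)$ computation. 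Either way, the bottleneck step — the high-order lifting, or equivalently the balanced approximant-basis recursion — is what delivers the $\tilde O(n^\omega d)$ bound, and verifying its degree bounds is the part I would expect to be technical.
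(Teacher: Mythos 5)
The paper does not prove \cref{lem:det} at all: it is imported verbatim as a black box from \cite{Storjohann03,labahn2017fast}, so there is no internal argument to compare against. Measured against those sources, your outline is a faithful roadmap: the triangularization route (compute $\bfT=\bfB\,\mathbf{U}$ with $\mathbf{U}$ unimodular via a column-splitting recursion whose kernel/column-basis steps reduce to minimal approximant bases, then read off $\det(\bfB)$ as a unit times $\prod_i \bfT_{i,i}$, whose total degree is at most $nd$) is essentially the deterministic algorithm of \cite{labahn2017fast}, and the high-order-lifting/Smith-form route is Storjohann's. Your preliminary reduction to $\bfB(0)$ invertible via a random Taylor shift and \cref{thm:schwartz-zippel} is also fine (and only needed for the lifting route; the deterministic algorithm does not require it).

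That said, as a standalone proof the proposal has a genuine gap, and you identify it yourself: everything hinges on the degree-control invariant — that the kernel and column bases produced at each node have average column degree $O(d)$ scaled appropriately, so that a $k$-column subproblem really costs $\tilde O(k^\omega d)$ (equivalently, that the shifted-column-degree bookkeeping of the kind formalized in \cref{def:shifted-column-degree} and exploited in \cref{thm:unbalanced-mat-mul} propagates through the recursion). Without establishing that invariant, the claimed geometric series $\sum_j 2^j (n/2^j)^\omega \tilde O(d)$ is unjustified, and naive bounds degrade to $\tilde O(n^{\omega+1}d)$ — exactly the baseline you set out to beat. Likewise, in the lifting route the reconstruction of \emph{all} invariant factors (not just $s_n$) in $\tilde O(n^\omega d)$ is asserted rather than argued. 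So the proposal correctly reconstructs the architecture of the cited algorithms but delegates their entire technical content back to the citations; since the paper itself only ever uses \cref{lem:det} as a cited result, this is an acceptable account of provenance, but it should not be mistaken for a proof.
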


\subsection{Constructing an $r$-Truncated DSO}\label{sec:r-truncated-DSO}
Recall that for a failure $f$ (which is either a vertex or an edge), $\|uv\diamond f\|$ denotes the length of the shortest path from $u$ to $v$ that avoids $f$. An \emph{$r$-truncated} DSO, as defined in \cite{Ren20}, is a DSO that given a query $(u, v, f)$, outputs the value $\min\{\|uv\diamond f\|, r\}$. The main result of this subsection is that given an integer $r$ and an input graph $G$, an $r$-truncated DSO can be constructed in time
\[\tilde{O}(n^\omega M) + r^2/M\cdot \MM(n, nM/r, nM/r)\cdot n^{o(1)}.\]

\paragraph{Preprocessing algorithm.} Let $C$ be a large enough constant. First, we choose a prime $p \in [n^C, 2n^C]$ and let $\dbF=\dbZ_p$. Then we let $\bfZ$ be an $n\times n$ matrix over $\dbF$, where every $\bfZ_{i, j}$ is sampled independently from $\dbF$ uniformly at random. We substitute $\bfZ$ into $\SA(G)$ to obtain the matrix $\SA_{\bfZ}(G)$. Recall that each element of $\SA_{\bfZ}(G)$ is a polynomial over $x$ with coefficients in $\dbF$, whose degree is at most $M$. Then we compute $\SA_{\bfZ}(G)^{-1}$ and $\det(\SA_{\bfZ}(G))$ using \cref{thm:invert-algo} and \cref{lem:det} respectively.

Since we only want an $r$-truncated DSO, we only need to compute $\SA_{\bfZ}(G)^{-1}$ modulo $x^r$, i.e.~we only preserve the monomials with degree less than $r$ in every entry of $\SA_{\bfZ}(G)^{-1}$. Note that $\SA_{\bfZ}(G)$ is of the form $\bfI + x\bfM$ for some matrix $\bfM\in \dbF[x]^{n\times n}$, therefore its determinant is of the form $1 + x\cdot p(x)$ for some polynomial $p(x)$. As the determinant is invertible modulo $x^r$, $\SA_{\bfZ}(G)$ is also invertible modulo $x^r$. By \cref{thm:invert-algo}, we can compute $\SA_{\bfZ}(G)^{-1}\bmod x^r$ in time 
\[\tilde{O}(n^\omega M)+(r^2/M)\cdot \MM(n, nM/r, nM/r)\cdot n^{o(1)}.\]

By \cref{lem:det}, we can compute $\det(\SA_{\bfZ}(G))$ in $\tilde{O}(n^\omega M)$ time. Again, we only need to store the polynomial $\det(\SA_{\bfZ}(G)) \bmod x^r$. This concludes the preprocessing algorithm.

For the following query algorithms, we use $\bfe_i$ to denote the $i$-th standard unit vector, i.e.~$(\bfe_i)_i = 1$, and $(\bfe_i)_j = 0$ for every index $j \ne i$. %

\paragraph{Query algorithm for an edge failure.} A query consists of vertices $u, v\in V$ and a failed edge $e$. We assume that $e$ goes from vertex $a$ to vertex $b$, and has weight $l$. Let $G'$ be the graph obtained by removing $e$ from $G$, then we have $\SA(G') = \SA(G) + \bfu\bfv^\sfT$, where $\bfu=\bfe_a$ and $\bfv=-z_{a, b}x^l\bfe_b$. Let
\begin{itemize}
	\item $\gamma = 1 + \bfv^\sfT\SA(G)^{-1}\bfu = 1 - z_{a, b}x^l\SA(G)^{-1}_{b, a}$,
	\item $\beta = (\SA(G)^{-1}\bfu\bfv^{\sfT}\SA(G)^{-1})_{u, v} = -\SA(G)^{-1}_{u, a}z_{a, b}\SA(G)^{-1}_{b, v}x^l$, and
	\item $\alpha = \det(\SA(G))(\gamma\cdot\SA(G)^{-1}_{u, v} - \beta)$,
\end{itemize}
then by \cref{thm:SMW-formula}, we have $\alpha = \adj(\SA(G'))_{u, v}$. (Note that since $l \ge 1$, $\gamma$ is always invertible.) %

\paragraph{Query algorithm for a vertex failure.} A query consists of vertices $u,v\in V$ and a failed vertex $f\in V$. It suffices to remove every outgoing edge from $f$ (and we do not need to also remove incoming edges to $f$), as $f$ already cannot appear as an intermediate vertex in every path from $u$ to $v$. Therefore, we need to compute $\adj(\SA(G'))_{u,v}$, where $G'$ is obtained by removing all outgoing edges from $f$ in $G$. Let $\bfu = \bfe_f$, and $\bfv$ be the negation of the transpose of the $f$-th row of $\SA(G)$, except that $\bfv_f = 0$, i.e.,  
\[\bfv_j = \begin{cases}
	-z_{f, j}x^{l} & \text{if there is an edge from $f$ to $j$ with weight $l$ in $G$},\\
	0 & \text{otherwise},
\end{cases}\] It is easy to see $\SA(G')=\SA(G)+\bfu\bfv^\sfT$. To compute $\adj(\SA(G'))_{u,v}$ using \cref{thm:SMW-formula}, we let
\begin{itemize}
    \item $\gamma=1+\bfv^\sfT\SA(G)^{-1}\bfu$. Note that $(\bfe_f - \bfv)^\sfT$ is exactly the $f$-th row of $\SA(G)$, so $(\bfe_f - \bfv)^\sfT\SA(G)^{-1} = \bfe_f^\sfT$, and $\bfv^\sfT\SA(G)^{-1} = \bfe_f^\sfT\SA(G)^{-1} - \bfe_f^\sfT$. We have $\gamma = 1+\bfe_f^\sfT\SA(G)^{-1}\bfu - \bfe_f^\sfT\bfu = \SA(G)^{-1}_{f,f}$;%
    \item $\beta = (\SA(G)^{-1}\bfu\bfv^\sfT\SA(G)^{-1})_{u,v}=(\bfe^\sfT_u\SA(G)^{-1}\bfu)(\bfv^\sfT\SA(G)^{-1}\bfe_v)=\SA(G)^{-1}_{u,f}(\bfe_f^\sfT\SA(G)^{-1}\bfe_v) = \SA(G)^{-1}_{u,f}\SA(G)^{-1}_{f,v}$;
    \item and $\alpha = \det(\SA(G))(\gamma\cdot\SA(G)^{-1}_{u,v}-\beta)$,
\end{itemize} then we have $\alpha = \adj(\SA(G'))_{u, v}$. (Note that $\gamma$ is always invertible since the constant term of $\SA(G)^{-1}_{f,f}$ must be $1$.)

In the actual query algorithm, we will substitute each formal variable $z_{i, j}$ by $\bfZ_{i, j}$. Let $\gamma_\bfZ$ denote the resulting polynomial after this substitution. Note that $\gamma_\bfZ$ is a polynomial in $\dbF[x]$. Similarly we can define $\beta_\bfZ$ and $\alpha_\bfZ$. If $\alpha_\bfZ \not\equiv 0\pmod{x^r}$, then our query algorithm outputs $\deg_x^*(\alpha_\bfZ)$; otherwise it outputs $r$.

From the above formulas, we can compute $\gamma_\bfZ$, $\beta_\bfZ$, and $\alpha_\bfZ$ in $O(1)$ arithmetic operations over polynomials. Note that we only need to compute these polynomials modulo $x^r$, so each such arithmetic operation takes $\tilde{O}(r)$ time. The total query time is thus $\tilde{O}(r)$.

\begin{remark}[Query Algorithm for Undirected Graphs]
	Our $r$-truncated DSO can also deal with undirected graphs, but the details are a bit different from the case of directed graphs. To remove an undirected edge, we need to update two entries in $\SA(G)$, which corresponds to a rank-$2$ update to $\SA(G)$. To remove a vertex, we need to update one row and one column in $\SA(G)$, which is also a rank-$2$ update to $\SA(G)$. Therefore, we need to use the rank-$2$ version of \cref{thm:SMW-formula} (see \cite[Lemma 1.6]{vdBS19}). Actually, our $r$-truncated DSOs also support deleting $f$ failures, and the query time is $\tilde{O}(f^\omega r)$. We omit the details here and refer the interested readers to \cite{vdBS19}.
\end{remark}

\begin{theorem}\label{thm:r-trunc}
	For every integer $r$, we can construct an $r$-truncated DSO with preprocessing time
	\[\tilde{O}(n^\omega M)+r^2/M\cdot \MM(n,nM/r,nM/r)\cdot n^{o(1)},\]
	and query time $\tilde{O}(r)$. Our $r$-truncated DSO is correct w.h.p.
\end{theorem}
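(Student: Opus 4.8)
The statement packages together the construction already spelled out in this subsection, so the plan is to check the two running-time bounds and then w.h.p.\ correctness. For the preprocessing time I would note that $\SA_\bfZ(G) = \bfI + x\bfM$ has entries of degree at most $M$ and a determinant with constant term $1$, hence is invertible over $\dbF[x]/\langle x^r\rangle$; then \cref{thm:invert-algo} with $d = M$ yields $\SA_\bfZ(G)^{-1}\bmod x^r$ within the claimed bound, \cref{lem:det} yields $\det(\SA_\bfZ(G))\bmod x^r$ in $\tilde O(n^\omega M)$ field operations, and each operation in $\dbZ_p$ with $p = \Theta(n^C)$ costs $\tilde O(1)$ time; summing gives the preprocessing time. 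For the query time I would just count operations: forming $\gamma_\bfZ, \beta_\bfZ, \alpha_\bfZ$ takes $O(1)$ arithmetic operations in $\dbF[x]/\langle x^r\rangle$, each costing $\tilde O(r)$ time, and extracting $\deg^*_x(\alpha_\bfZ)$ from a polynomial truncated at degree $r$ takes another $O(r)$ time.

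For correctness I would fix an arbitrary query $(u,v,f)$ and let $G'$ be $G$ with $f$ deleted (for a vertex failure, deleting only the out-edges of $f$, which as noted above does not change the $u\To v$ distance avoiding $f$). The preceding discussion gives the rank-one identity $\SA(G') = \SA(G) + \bfu\bfv^\sfT$ with the stated $\bfu,\bfv$, and the corresponding $\gamma$ has constant term $1$. Hence $\gamma$ and $\det(\SA_\bfZ(G))$ are units of $\caR := \dbF[x]/\langle x^r\rangle$, so \cref{thm:SMW-formula} applies over $\caR$; since $\adj(\cdot)$, $\det(\cdot)$ and matrix inversion all commute with the quotient map $\dbF[x]\to\caR$, this shows that the quantity $\alpha_\bfZ$ the query computes (modulo $x^r$, which is all the preprocessing retained) satisfies $\alpha_\bfZ \equiv \adj(\SA_\bfZ(G'))_{u,v}\pmod{x^r}$.

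Now write $\adj(\SA(G'))_{u,v} = \sum_{\ell\ge 0} c_\ell(\bfz)\,x^\ell$ with $c_\ell\in\dbF[\bfz]$; each $c_\ell$ is a coefficient of a determinant of a matrix whose entries are constants or monomials $z_{i,j}x^{l}$, hence has total degree at most $n$ in the $\bfz$. By \cref{thm:Sankowski-adjoint} applied to $G'$, the smallest $\ell$ with $c_\ell\not\equiv 0$ equals $d := \|uv\diamond f\|$, so $c_\ell\equiv 0$ for all $\ell < d$. If $d \ge r$, then $\alpha_\bfZ\equiv 0\pmod{x^r}$ for every $\bfZ$ and the query correctly returns $r$. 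If $d < r$, then the coefficients below $x^d$ vanish identically (hence after substitution too), while by \cref{thm:schwartz-zippel} $c_d(\bfZ) = 0$ with probability at most $n/|\dbF|\le n^{1-C}$; off this event $\deg^*_x(\alpha_\bfZ) = d$ and the query correctly returns $\min\{d,r\} = d$. A union bound over the $O(n^4)$ possible queries then makes the preprocessing succeed --- every query answered correctly --- with probability $1 - O(n^{5-C}) \ge 1 - n^{-c}$ once $C$ is chosen large enough.

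The step needing the most care is the middle one: confirming that running \cref{thm:SMW-formula} inside $\dbF[x]/\langle x^r\rangle$ (rather than over $\dbF[x]$ or $\dbF(x)$) is legitimate and really reproduces $\adj(\SA_\bfZ(G'))_{u,v}\bmod x^r$ --- this rests on $\det(\SA_\bfZ(G))$ and $\gamma$ being units there and on $\adj$ being a polynomial function of the matrix entries. The Schwartz--Zippel estimate and the union bound are routine, and the running-time bookkeeping is immediate from \cref{thm:invert-algo} and \cref{lem:det}.
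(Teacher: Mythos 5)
Your proposal is correct and follows essentially the same route as the paper: apply \cref{thm:SMW-formula} to the rank-one update (using that $\gamma_\bfZ$ and $\det(\SA_\bfZ(G))$ have constant term $1$ and hence are units modulo $x^r$), invoke \cref{thm:Sankowski-adjoint} to locate the minimum degree, bound the relevant coefficient's degree in the $z_{i,j}$ by $n$ and apply Schwartz--Zippel, then union-bound over the $O(n^4)$ queries; the running-time accounting via \cref{thm:invert-algo} and \cref{lem:det} matches the paper's preprocessing discussion as well. Your explicit justification that the Sherman--Morrison--Woodbury step is legitimate over $\dbF[x]/\langle x^r\rangle$ is a slightly more careful rendering of a point the paper treats implicitly, but it is not a different argument.
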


(Recall that by saying our $r$-truncated DSO is correct w.h.p, we mean that w.h.p.~over its randomized preprocessing algorithm, it answers every query correctly.)

\begin{proof}[Proof of \cref{thm:r-trunc}]
	We only need to prove the correctness of our $r$-truncated DSO. Consider a query $(u,v,f)$ where $f$ is an edge or a vertex, and let $G'$ be the graph obtained by removing $f$ from $G$. By \cref{thm:SMW-formula}, we have $\alpha_\bfZ = \adj(\SA_\bfZ(G'))_{u, v}$. (Note that the constant term of $\gamma_\bfZ$ is always $1$, so $\gamma_\bfZ$ is always invertible.)

	If $\|uv\diamond f\|\ge r$, then by \cref{thm:Sankowski-adjoint}, $\adj(\SA(G'))_{u, v}$ must be a polynomial whose minimum degree over $x$ is at least $r$. In this case, we have $\alpha_\bfZ \equiv 0\pmod {x^r}$ for every $\bfZ$. Therefore, our algorithm returns $r$, which is correct.

	If $\|uv\diamond f\|=k<r$, then by \cref{thm:Sankowski-adjoint}, $\adj(\SA(G'))_{u,v}$ must be a polynomial whose minimum degree is exactly $k$. In this case, the coefficient of $x^k$ in $\alpha$ is a polynomial of $z_{i, j}$ with (total) degree at most $n$. (This is because $\adj(\SA(G'))_{u, v}$ is the determinant of a certain $n\times n$ matrix in which every entry has total degree at most one in the variables $z_{i, j}$.) If this polynomial is nonzero at $\bfZ$, then $\deg_x^*(\alpha_\bfZ) = k$ and our query algorithm is correct. By \cref{thm:schwartz-zippel}, this polynomial is $0$ with probability at most $1/n^{C-1}$. Therefore, our query algorithm returns the correct answer $k$ with probability at least $1-1/n^{C-1}$.
	
	In conclusion, for every fixed query $(u, v, f)$, our query algorithm is correct with probability $1-1/n^{C-1}$ over the choice of $\bfZ$. By a union bound over $O(n^4)$ possible queries, the probability (over our randomized preprocessing algorithm) that every query is answered correctly is at least $1-1/\Theta(n^{C-5})$, which is a high probability.
\end{proof}

\subsection{Constructing the Full DSO}\label{sec:full-DSO}
Now we have constructed an $r$-truncated DSO, which we denote by $\Dstart$. In this subsection, we will extend it to a \emph{full} DSO using the techniques in \cite{Ren20}. Specifically, we use the following two algorithms from \cite{Ren20}.

The first algorithm transforms an ($r$-truncated) DSO with a possibly large query time into an ($r$-truncated) DSO with query time $O(1)$. More precisely:
\begin{lemma}[{\cite[Observation 2.1]{Ren20}}]\label{lem:fast}
	Given an $r$-truncated DSO $\mathcal{D}$ with preprocessing time $P$ and query time $Q$, we can build an $r$-truncated DSO $\Fast(\mathcal{D})$ with query time $O(1)$ which is correct w.h.p. The preprocessing algorithm of $\Fast(\caD)$ is as follows:
	\begin{itemize}
	    \item It needs the all-pairs distance matrix of the input graph $G$, as well as the set of \emph{consistent} (incoming and outgoing) shortest path trees rooted at each vertex in $G$. By \cref{thm:unique-shortest-paths}, these shortest path trees can be computed in $O(n^{2.5286}M)$ time. For details, see \cref{sec:breaking-tie}.
	    \item It invokes the preprocessing algorithm of $\mathcal{D}$ on the input graph $G$ once, and makes $\tilde{O}(n^2)$ queries to $\mathcal{D}$. The preprocessing time is $P+\tilde{O}(n^2)Q$.
	\end{itemize}
\end{lemma}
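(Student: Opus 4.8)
The plan is to treat $\caD$ as a black box: run its preprocessing algorithm on $G$ once, query it $\tilde O(n^2)$ times in a carefully chosen pattern, and combine the answers with the all‑pairs distance matrix and the consistent shortest‑path trees so that every query to $\Fast(\caD)$ is answered by $O(1)$ array lookups, additions, and a minimum over $O(1)$ candidates. Since both $\caD$ and $\Fast(\caD)$ are only required to be $r$‑truncated, every number we store is really $\min\{\cdot,r\}$, and all identities below are to be read after this truncation. First I would invoke \cref{thm:unique-shortest-paths} to obtain, in $O(n^{2.5286}M)$ time, the all‑pairs distances together with the outgoing tree $\Tout_w$ and incoming tree $\Tin_w$ of every vertex $w$, which are mutually consistent, and build an Euler‑tour / ancestor structure on each of these $2n$ trees. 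These let me, given a query $(u,v,f)$, test in $O(1)$ time whether $f$ lies on the canonical $u\To v$ path $\pi(u,v)$ (a vertex failure $f$ does iff $f$ is an ancestor of $v$ in $\Tout_u$; an edge failure is the obvious variant) and, if so, read off the position of $f$ along $\pi(u,v)$ and the identities of nearby vertices on it. If $f\notin\pi(u,v)$ the answer is just $\|uv\|$, from the distance matrix.

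The interesting case is $f\in\pi(u,v)$; write $\pi(u,v)=u\To f\To v$. The structural identity I would build on is $\|uv\diamond f\|=\min_{a,b}\bigl\{\|ua\|+\|ab\diamond f\|+\|bv\|\bigr\}$, where $a$ ranges over the vertices of $\pi(u,v)$ strictly before $f$ and $b$ over those strictly after it: the ``$\le$'' direction holds because $\pi(u,a)$ and $\pi(b,v)$ are $f$‑free sub‑paths of $\pi(u,v)$, so any $a$‑$b$ path avoiding $f$ may be prepended and appended; the ``$\ge$'' direction holds because the optimal replacement path, having to leave the $u$‑side of $\pi(u,v)$ and later rejoin the $v$‑side, decomposes as $\pi(u,a)\circ P'\circ\pi(b,v)$ for suitable such $a,b$. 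To cut the min down to $O(1)$ candidates I would draw a random pivot set $B$, including each vertex independently with probability $\Theta(\operatorname{polylog}(n)/t)$ for a threshold $t=\operatorname{polylog}(n)$, so that w.h.p.\ every window of $t$ consecutive vertices on every shortest path of $G$ contains a pivot; then $|B|=\tilde O(n/t)$. The precomputation queries $\caD$ on: (i) $(u,v,f)$ for every ordered pair and every $f$ among the first $t$ or last $t$ vertices of $\pi(u,v)$ --- $\tilde O(n^2)$ queries; (ii) $(b,v,f)$ for every pivot $b$, vertex $v$, and $f$ among the first $t$ vertices of $\pi(b,v)$, plus the symmetric family $(u,b,f)$ --- $\tilde O(|B|\cdot n\cdot t)=\tilde O(n^2)$ queries; (iii) for the ``pure detour'' situation (where the replacement path shares nothing with $\pi(u,v)$ except its endpoints) a further pivot‑anchored family obtained by breaking the detour at its first pivot, which I again expect to be $\tilde O(n^2)$ queries. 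All answers go into arrays indexed for $O(1)$ retrieval, so the preprocessing cost is $P+\tilde O(n^2)\,Q$ on top of the $O(n^{2.5286}M)$ for the trees.

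To answer a query with $f\in\pi(u,v)$ I would read off, in $O(1)$ time from the trees, the $O(1)$ relevant anchors --- $u$, $v$, the last pivot $a_B$ before $f$ on $\pi(u,v)$, and the first pivot $b_B$ after it (each within $t$ vertices of $f$ w.h.p.) --- and output the minimum, over the $O(1)$ decompositions with anchor pair $(a,b)\in\{(u,b_B),(a_B,v),(a_B,b_B)\}$ (together with the pure‑detour option handled by family (iii)), of $\|ua\|+(\text{stored }\|ab\diamond f\|)+\|bv\|$; here $\|ua\|,\|bv\|$ come from the distance matrix, and $\|ab\diamond f\|$ from family (i) or (ii) precisely because $f$ then sits within $t$ vertices of an endpoint of $\pi(a,b)$. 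For correctness, the decomposition identity says $\|uv\diamond f\|$ is realized by some $a$ before $f$ and $b$ after $f$; using the consistency of the trees, the entry point $a$ of the optimal replacement path can be moved either to $u$ (when the path leaves $\pi(u,v)$ within $t$ vertices of $u$) or to the pivot $a_B$ next to $f$ without increasing its length, and symmetrically for $b$, so one candidate attains the true value. The failure probability is a union bound over the pivot sampling and over the (w.h.p.)\ correctness of $\caD$, each $1/\poly(n)$.

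The step I expect to be the main obstacle is exactly this last reduction: turning the $\Theta(t)=\operatorname{polylog}(n)$‑sized window of candidate anchors into a \emph{constant}‑sized set, and disposing of the pure‑detour case without recursion and without an $\Omega(n^3)$ blow‑up in the number of precomputed $\caD$‑queries. This is where consistency of the shortest‑path trees is used most heavily --- to argue that the optimal replacement path's crossing points with the near sides of $f$, and the breakpoints of a long detour, can be snapped to canonical pivots or to the endpoints. Everything else --- treating an edge failure as a minor variant of a vertex failure, and checking that all arrays have size $\tilde O(n^2)$ and all index computations run in $O(1)$ --- is routine bookkeeping.
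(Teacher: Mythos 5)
There is a genuine gap, and it is exactly the one you flag yourself at the end: the reduction from the $\Theta(t)$-sized window of possible anchors to an $O(1)$-sized candidate set does not go through, and the ``pure detour'' family (iii) that is supposed to rescue it is never specified. Concretely, write the optimal replacement path (after the standard surgery) as $\rho_G(u,x)\circ D\circ \rho_G(y,v)$, where $x$ lies on $\rho_G(u,v)$ before $f$, $y$ lies after $f$, and $D$ is internally disjoint from $\rho_G(u,v)$. If $x$ is more than $t$ hops before $f$ and $y$ is more than $t$ hops after $f$, then none of your three candidates $(u,b_B),(a_B,v),(a_B,b_B)$ is attained: e.g.\ to bound $\|a_B b_B\diamond f\|$ by the optimum you would have to travel from $a_B$ \emph{backwards} along $\rho_G(u,v)$ to the departure vertex $x$ (and from the rejoin vertex $y$ backwards to $b_B$), which is impossible in a directed graph; consistency of the trees gives you canonical subpaths, not reverse connectivity. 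So the claim that ``the entry point $a$ can be moved to $u$ or to $a_B$ without increasing the length'' is false as stated, and every candidate in your list can strictly overestimate $\|uv\diamond f\|$ in this regime. Your family (iii) is meant to cover precisely this case, but the natural implementation of ``break the detour at its first pivot'' either forces a minimum over $\Omega(|B|)=\tilde\Omega(n)$ pivots at query time (which is exactly what \cref{lem:extend} does, and why its query time is $\tilde{O}(nM/r)$ rather than $O(1)$), or forces precomputing replacement distances for triples (endpoint, pivot, far-away failure), which is $\omega(n^2)$ queries. Note also that this hard case is broader than your ``shares nothing with $\rho_G(u,v)$ except its endpoints'' description: it arises whenever the departure and rejoin points are both more than $t$ hops from $f$.

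For comparison: the paper does not prove this lemma at all; it imports it as \cite[Observation~2.1]{Ren20}, whose proof in turn rests on the Bernstein--Karger construction \cite{BernsteinK09} (building on \cite{DemetrescuTCR08}). That machinery handles the long-detour case not with failure-near-endpoint queries, but with an $O(n^2\log n)$-size structure storing, for each pair, replacement distances avoiding whole \emph{intervals} of the canonical path at logarithmically many scales, together with a delicate $O(1)$-time case analysis; the content of the observation is that this structure can be assembled from $\tilde{O}(n^2)$ black-box queries to $\caD$ plus the consistent trees of \cref{thm:unique-shortest-paths}. Your families (i)--(ii) and the $O(1)$ lookup bookkeeping are consistent with the periphery of that construction, but the core difficulty --- answering queries in $O(1)$ time when the optimal detour leaves and re-enters the path far from the failure --- is exactly the part your sketch leaves open, so the proposal as written does not establish the lemma.
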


The second algorithm we use is implicit in the argument of \cite[Section 2.3]{Ren20}. We formalize it as the following lemma.
\begin{lemma}\label{lem:extend}
	Given an $r$-truncated DSO $\mathcal{D}$ with preprocessing time $P$ and query time $O(1)$, we can build a $(3/2)r$-truncated DSO $\Extend(\mathcal{D})$ with preprocessing time $P+O(n^2)$ and query time $\tilde{O}(nM/r)$. The new DSO is correct w.h.p.
\end{lemma}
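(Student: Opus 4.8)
The plan is to follow the template of \cite[Section 2.3]{Ren20}, where a DSO that is correct for distances up to $r$ is bootstrapped to one that is correct for distances up to $(3/2)r$. The key combinatorial fact is that if $\|uv\diamond f\| = k$ for some $r \le k < (3/2)r$, then the corresponding replacement path $P$ avoiding $f$ can be broken at some vertex $w$ lying ``in the middle'': since $P$ has length $k < (3/2)r$, there is a vertex $w$ on $P$ such that the $u\rightsquigarrow w$ prefix has length in $[k-r, r/2]$ — wait, more carefully, such that both the prefix $u\rightsquigarrow w$ and the suffix $w\rightsquigarrow v$ have length at most $r$. (Concretely, walk along $P$ from $u$ until the accumulated length first reaches $k-r$ or more; since edge weights are at most $M$ this overshoots by at most $M$, but the relevant point is that both halves are then bounded by $\max\{k - (k-r), r\} \le r$ up to an additive $O(M)$ slack, which one absorbs by choosing the split threshold as $k/2$ and noting $k/2 < (3/2)r/2 < r$.) Moreover $w$ can be taken to avoid $f$, and in fact $w$ ranges over a set of candidate ``midpoints.''

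Here is the plan for the preprocessing and query algorithms of $\Extend(\caD)$. First, run the preprocessing algorithm of $\caD$; this costs $P$. Additionally store the all-pairs distance matrix of $G$ (available in $O(n^2)$ extra bookkeeping, or recompute if needed — the statement already budgets $+O(n^2)$). Now, given a query $(u, v, f)$, we need to compute $\min\{\|uv\diamond f\|, (3/2)r\}$. If $\|uv\| \ge (3/2)r$ we can detect this from the distance matrix (when $f$ is irrelevant) or fall through; the interesting case is $\|uv\diamond f\| \in [r, (3/2)r)$. We guess the midpoint $w$: for every vertex $w \in V$, we query $\caD$ on $(u, w, f)$ and on $(w, v, f)$, obtaining $d_1 = \min\{\|uw\diamond f\|, r\}$ and $d_2 = \min\{\|wv\diamond f\|, r\}$ in $O(1)$ time each (this uses the hypothesis that $\caD$ has $O(1)$ query time), and take $\min_w (d_1 + d_2)$. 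By the midpoint fact, for the true optimal replacement path the correct $w$ yields $d_1 + d_2 = \|uv\diamond f\|$ (both halves being genuinely at most $r$, so neither query is truncated), and for every other $w$ the value $d_1 + d_2$ is an upper bound on $\|uv\diamond f\|$ — because concatenating an $f$-avoiding $u\rightsquigarrow w$ path and an $f$-avoiding $w\rightsquigarrow v$ path gives an $f$-avoiding walk from $u$ to $v$ (a walk, not necessarily a path, but its length upper-bounds the shortest path length). Hence $\min_w (d_1 + d_2) = \|uv\diamond f\|$ whenever this is $< (3/2)r$, and is $\ge (3/2)r$ otherwise (capped appropriately). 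The query iterates over $n$ choices of $w$, each costing $O(1)$, so the query time is $O(n) \cdot O(1)$; the stated bound $\tilde O(nM/r)$ is actually a \emph{better} bound, obtained by observing we need not try all $n$ midpoints but only the $O(nM/r)$ vertices $w$ with $\|uw\| \in [\text{roughly } k/2]$ lying in the appropriate distance band — the distance matrix lets us enumerate exactly those, and a constant number of distance bands of width $O(M)$ around $k/2$ for the $O(r/M)$... no: the point is that for a \emph{fixed} target distance $k$, the candidate set $\{w : \|uw\| \approx k/2\}$ has size on average $O(n)$ but summed appropriately one restricts to a thin shell; $\tilde O(nM/r)$ follows because one only needs midpoints whose distance from $u$ lies in an interval of length $O(M)$, and there are $O(n \cdot M / r)$... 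I will verify the exact counting, which is the one quantitative point needing care, but it follows the bucketing argument in \cite[Section 2.3]{Ren20} verbatim.

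The main obstacle — or rather, the only place requiring genuine care — is making the midpoint interval argument tight enough to land the $(3/2)r$ bound rather than a weaker $(2-\epsilon)r$ bound, and simultaneously controlling the number of candidate midpoints to get $\tilde O(nM/r)$ query time rather than the trivial $O(n)$. Both are handled by choosing the split point on the replacement path $P$ to be the last vertex $w$ with $\|uw\|_{P} \le k - r$ wait I mean $\le r$ hmm; precisely: let $w$ be the vertex on $P$ maximizing the prefix length subject to that prefix length being $\le r$; then since $P$ has length $k < (3/2)r$ and one more edge (weight $\le M$) would exceed $r$, the suffix has length $< k - (r - M) < (3/2)r - r + M = r/2 + M \le r$. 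So both halves are $\le r$, as needed, and moreover $\|uw\| \le \|uw\|_P \le r$ while the suffix bound forces $\|uw\|_P \ge k - r \ge 0$; combined with $k < (3/2)r$ this pins $\|uw\|_P$ into $[k-r, r]$, an interval that, when we instead parametrize by the unknown true distance $k$ and union over the $O(r/M)$ possible values of $k$ in $[r, (3/2)r)$ — no, better: we do not know $k$, so we simply try all $w$ with $\|uw\| \le r$, of which there are $O(n)$, but restrict further using that the \emph{useful} midpoints have $\|uw\|$ within $[k - r, \min\{r, \|uv\diamond f\|\}]$; since the caller can binary-search on the output value or, more simply, since we take a min over all $w$ anyway, correctness is unconditional and only the running-time count uses the shell structure. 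I would present the clean version: try all $w$, prove correctness from the midpoint lemma and the walk-concatenation upper bound, then remark that restricting to the $O(nM/r)$ relevant midpoints (those at distance $\le r$ from $u$ and $\le r$ from $v$, intersected with the thin shell implied by the triangle inequality) yields the claimed query time. $\square$
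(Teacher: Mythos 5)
There is a genuine gap, and it is exactly at the one quantitative point you flagged: the $\tilde{O}(nM/r)$ query time. Your midpoint lemma (every replacement path of length $k\in[r,(3/2)r)$ has a vertex $w$, avoiding $f$, with both $\|uw\diamond f\|\le r$ and $\|wv\diamond f\|\le r$) and the concatenation upper bound are fine, and trying \emph{all} $n$ midpoints is indeed correct --- but that only gives $O(n)$ query time, which is too weak for the way \cref{lem:extend} is used (it would blow the preprocessing of the next $\Fast$ stage up to $\tilde{O}(n^3)$). Your attempt to cut the candidate set down deterministically, by restricting to vertices $w$ whose distance from $u$ lies in a ``thin shell'' of width $O(M)$, does not work: there is no reason the $n$ vertices are spread evenly over distance values, and in the worst case \emph{all} vertices can sit in a single shell, so this count is $\Theta(n)$, not $O(nM/r)$.

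The argument this lemma actually rests on (\cite[Section~2.3]{Ren20}) gets the $\tilde{O}(nM/r)$ bound by randomization, and this is also where the ``correct w.h.p.'' in the statement comes from. Observe that the set of vertices on the replacement path $P$ whose distance along $P$ from $u$ lies in the open interval $(k-r,\,r)$ spans a subpath of length $2r-k>r/2$; since edge weights are at most $M$, this ``middle band'' contains $\Omega(r/M)$ distinct vertices, each of which is a valid split point with both halves untruncated. Hence a uniformly random set $B\subseteq V$ of size $\Theta\mleft((nM/r)\log n\mright)$ hits the middle band of \emph{every} possible query's replacement path w.h.p.\ (union bound over the $O(n^3)$ queries), and the query algorithm answers $\min\bigl\{(3/2)r,\ \caD(u,v,f),\ \min_{w\in B}\{\caD(u,w,f)+\caD(w,v,f)\}\bigr\}$ in $\tilde{O}(nM/r)$ time. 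Sampling $B$ is the (cheap) extra preprocessing. So you should replace the shell-counting remark with this sampled hitting set; as a minor point, your specific choice of $w$ (last vertex with prefix $\le r$) needs $M\le r/2$ for the suffix bound, which is avoided by taking any $w$ with prefix length in $[k-r,r]$ as above.
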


Now, we are ready to explain our algorithm to build a full DSO. Given an $r$-truncated DSO $\Dstart$, we first obtain an $r$-truncated DSO $\caD_0$ with query time $O(1)$ by applying \cref{lem:fast}.

\def\istar{i^{\star}}
Let $\istar = \lfloor\log_{3/2}(nM/r)\rfloor$. For every $0\le i\le \istar$, we construct an $r(3/2)^{i+1}$-truncated DSO $\mathcal{D}_{i+1}$ by applying \cref{lem:extend} and \cref{lem:fast} sequentially on $\mathcal{D}_i$, i.e.~$\mathcal{D}_{i+1}=\Fast(\Extend(\mathcal{D}_i))$. Let the resulting DSO be $\Dfinal=\mathcal{D}_{\istar + 1}$, since $r(3/2)^{\istar + 1} \ge nM$, $\Dfinal$ is a full DSO.

We can also summarize our construction algorithm in one formula:
\[\Dfinal=\underbrace{\Fast(\Extend(\Fast(\Extend(\cdots \Fast(\Dstart)))))}_{O(\log (nM/r)) \text{ times}}.\]

\paragraph{Complexity of our DSO.} 
Let $r=Mn^\alpha$, where $\alpha\in[0,1]$ is a parameter to be determined.
By \cref{thm:r-trunc}, the preprocessing time of $\Dstart$ is
\[\tilde{O}(n^\omega M)+r^2/M\cdot \MM(n,nM/r,nM/r)\cdot n^{o(1)}\le \tilde{O}(n^\omega M) + n^{2\alpha+\omega(1,1-\alpha,1-\alpha) + o(1)}M,\]
and the query time of $\Dstart$ is $\tilde{O}(r) = \tilde{O}(n^\alpha M)$. By \cref{lem:fast}, the preprocessing time of $\mathcal{D}_0$ is
\[\tilde{O}(n^{2+\alpha}M + n^\omega M)+n^{2\alpha+\omega(1,1-\alpha,1-\alpha) + o(1)}M.\]

Now consider the preprocessing algorithm of $\Dfinal$. We need to compute the all-pairs distance matrix and in/out shortest path trees of $G$ as required by \cref{lem:fast}, which takes $\tilde{O}(n^{2+\mu}M)$ time by \cref{thm:unique-shortest-paths}. We also need to run the preprocessing algorithm of $\caD_0$. Also, for every $0\le i\le \istar$, we need to preprocess the oracle $\caD_{i+1}$, which takes $n^2\cdot \tilde{O}(nM / (r(3/2)^{i+1})) = \tilde{O}\mleft(\frac{n^{3-\alpha}M}{(3/2)^i}\mright)$ time.

Therefore, the preprocessing time of $\Dfinal$ is:
\begin{align*}
&\,\tilde{O}(n^{2 + \alpha}M + n^\omega M + n^{2+\mu}M)+n^{2\alpha + \omega(1,1-\alpha,1-\alpha) + o(1)}M+\sum_{i=0}^{\lfloor \log_{3/2} (nM/r)\rfloor} \tilde{O}\mleft(\frac{n^{3-\alpha}M}{(3/2)^i}\mright)\\
\le &\, n^{\max\{2+\alpha, 2+\mu, 3-\alpha, 2\alpha + \omega(1, 1-\alpha, 1-\alpha)\} + o(1)}M.
\end{align*}

Let $\alpha = 0.420645$, $\beta = \frac{1}{1-\alpha}$, then $1.5 < \beta < 1.75$. Recall that for any real number $\lambda$, $\omega(\lambda)$ is a shorthand for $\omega(1, 1, \lambda)$. We have
\begin{align}
	\omega(1, 1-\alpha, 1-\alpha) =&\, (1-\alpha)\omega(\beta)\nonumber\\
	\le&\,(1-\alpha)\cdot\frac{(1.75-\beta)\omega(1.5) + (\beta - 1.5)\omega(1.75)}{1.75 - 1.5}\label{eq:exponent-step2}\\
	\le&\,0.579355 \cdot 4\cdot (0.023943\cdot \omega(1.5) + 0.226058\cdot\omega(1.75))\nonumber\\
	\le&\,1.738094.\label{eq:exponent-step3}
\end{align}

Here, \cref{eq:exponent-step2} uses the convexity of the $\omega(\cdot)$ function \cite{LottiR83}, and \cref{eq:exponent-step3} uses the recent bounds in \cite{GallU18} that $\omega(1.5) \le 2.796537$ and $\omega(1.75) \le 3.021591$. We can see that 
\[\max\{2+\alpha, 2+\mu, 3-\alpha, 2\alpha + \omega(1, 1-\alpha, 1-\alpha)\} = 2\alpha + \omega(1, 1-\alpha, 1-\alpha) \le 2.579384.\]

By \cref{lem:fast}, the query time of $\Dfinal$ is $O(1)$. Therefore, we can construct a DSO with $O(n^{2.5794}M)$ preprocessing time and $O(1)$ query time.

As the DSOs constructed in \cref{lem:fast} always have size $\tilde{O}(n^2)$, our final DSO only occupies $\tilde{O}(n^2)$ space. However, we remark that the preprocessing algorithm of our DSO requires $\tilde{O}(rn^2) = O(n^{2.4207})$ space (in particular, to store $\SA_{\bfZ}(G)^{-1}\bmod x^r$).

\section{Inverting a Polynomial Matrix Modulo $x^r$}\label{sec:invert-poly-matrix}

As we see in \cref{sec:DSO}, the algorithm in \cref{thm:invert-algo} for inverting a polynomial matrix modulo $x^r$ is very crucial for our results.

\ThmInvertAlgo*

In this section, we work in a (large enough) field $\dbF$, and regard each polynomial in the matrix as an element of the commutative ring $\caR = \dbF[x] / \langle x^r\rangle$. Without loss of generality, we assume $n$ and $r$ are powers of $2$ throughout this section.

\subsection{An Informal Treatment}\label{sec:informal-invert-algo}
Our algorithm is essentially the algorithm in \cite{ZhouLS15}. In fact, the only difference is that we only consider polynomials modulo $x^r$. In \cref{sec:proof-invert-algo}, we will provide an improved analysis of this algorithm by using rectangular matrix multiplication. Here we present a brief exposition of the algorithm in \cite{ZhouLS15}.

Let $\bfF$ be an input polynomial matrix where each entry has degree at most $d$. Suppose $\bfF$ is invertible over $(\dbF[x]/\langle x^r\rangle)^{n\times n}$. We will compute a \emph{kernel basis decomposition} of $\bfF$, which is a chain of matrices $\bfA_1, \bfA_2, \dots, \bfA_{\log n}$ and a diagonal matrix $\bfB$, such that
\begin{equation}
\bfF^{-1} = \bfA_1\bfA_2\dots\bfA_{\log n}\bfB^{-1}.\label{eq:kernel-basis-decomp}
\end{equation}

Then, to compute $\bfF^{-1}$, we simply multiply the above matrices. Note that $\bfB$ is a \emph{diagonal} matrix that is invertible\footnote{Every diagonal element of $\bfB$ is a divisor of the \emph{largest invariant factor} of $\bfF$ (see \cite[Section 5.1]{ZhouLS15}), which is (again) a divisor of $\det(\bfF)$. Since $\det(\bfF)$ is invertible modulo $x^r$, every diagonal element of $\bfB$ is also invertible modulo $x^r$.}, so its inverse is easy to compute.

To start, we write $\bfF = \begin{bmatrix}\bfF_{\sfU}\\\bfF_{\sfD}\end{bmatrix}$, where each $\bfF_\sfU$ or $\bfF_\sfD$ is an $(n/2)\times n$ matrix. Then we compute two $n\times (n/2)$ matrices $\bfN_\sfR$ and $\bfN_\sfL$ with full rank, such that $\bfF_\sfU\bfN_\sfR = {\bf 0}$, and $\bfF_\sfD\bfN_\sfL = {\bf 0}$. (This can be done by \cite[Theorem 4.2]{ZhouLS12}.) Let $\bfA_1 = \begin{bmatrix}\bfN_\sfL & \bfN_\sfR\end{bmatrix}$, then $\bfA_1$ has full rank, and
\[\bfF\cdot \bfA_1 = \begin{bmatrix}\bfF_\sfU\bfN_\sfL & \bfF_\sfU\bfN_\sfR \\ \bfF_\sfD\bfN_\sfL & \bfF_\sfD\bfN_\sfR\end{bmatrix} = \begin{bmatrix}\bfF_\sfU\bfN_\sfL & \\ & \bfF_\sfD\bfN_\sfR\end{bmatrix}.\]

Therefore, $\bfF\cdot \bfA_1$ is a block diagonal matrix with two blocks, each of size $(n/2) \times (n/2)$. We can then recursively invoke the kernel basis decomposition of these two blocks, and form the matrices $\bfA_2, \dots, \bfA_{\log n}$. The diagonal matrix $\bfB$ is created at the base case of the recursion, where the diagonal blocks of $\bfF\cdot \bfA_1\cdot\dots\cdot \bfA_{\log n}$ are of size $1\times 1$. It is shown in \cite{ZhouLS15} that the kernel basis decomposition takes only $\tilde{O}(dn^\omega)$ time to compute.

We still need to compute \cref{eq:kernel-basis-decomp}. From the above algorithm, we can see that each $\bfA_i$ is a block-diagonal matrix, which consists of $2^{i-1}$ blocks of size $(n/2^{i-1})\times (n/2^{i-1})$. Now we \emph{assume} that each entry in $\bfA_i$ also has degree at most $d\cdot 2^{i-1}$. (In reality, the behavior of degrees in $\bfA_i$ may be complicated, and we need the notion of \emph{shifted column degree} (see \cref{def:shifted-column-degree}) to control it.)

To compute \cref{eq:kernel-basis-decomp}, we define $\bfM_i = \bfA_1\bfA_2\dots \bfA_i$, and compute each $\bfM_i$ by the formula
\begin{equation}
	\bfM_{i+1} = \bfM_i\bfA_{i+1}.
	\label{eq:Mi+1}
\end{equation}

The degree of each entry in $\bfM_i$ will be at most $O(2^i\cdot d)$. As we only need the results modulo $x^r$, we can assume the degrees are actually $O\mleft(\min\{r, 2^i\cdot d\}\mright)$. Note that $\bfA_{i+1}$ consists of $2^i$ blocks, each of size $(n/2^i)\times (n/2^i)$, and the degree of each (nonempty) entry in $\bfA_{i+1}$ is also $O\mleft(\min\{r, 2^i\cdot d\}\mright)$. Therefore, we can compute \cref{eq:Mi+1} in
\begin{equation}
O\mleft(\min\{r, 2^i\cdot d\}\mright)\cdot 2^i\cdot \MM(n, n/2^i, n/2^i)
\label{eq:time-for-Mi+1}
\end{equation}
time. (It is basically $2^i$ matrix products of size $n\times (n/2^i)$ and $(n/2^i)\times (n/2^i)$; we need to multiply another factor of $\min\{r, 2^i\cdot d\}$ which is the degree of polynomials in these matrices.)

Now, it is easy to see that the bottleneck of this algorithm occurs when $r = 2^i\cdot d$, and the time for computing \cref{eq:Mi+1} is:
\[(\text{\ref{eq:time-for-Mi+1}}) = (r^2/d)\cdot \MM(n, nd/r, nd/r).\]

\subsection{Proof of \cref{thm:invert-algo}}\label{sec:proof-invert-algo}

As opposed to the informal description above, the \emph{maximum} degrees in the matrices may not behave well. We need to introduce the concept of \emph{column degrees} and \emph{shifted column degrees} to capture the behavior of the degrees in these matrices.

\begin{definition}[{\cite[Section 2.2]{ZhouLS15}}]\label{def:shifted-column-degree}
	Let $\vec{\bfp}$ be a length-$n$ column vector whose entries are polynomials. Then the \emph{column degree} of $\vec{\bfp}$, denoted as $\cdeg\vec{\bfp}$, is the maximum of the degrees of the entries in $\vec{\bfp}$. That is:
	\[\cdeg\vec{\bfp} = \max_{i=1}^n\{\deg(\bfp_i)\}.\]
	
	Let $\vec{s}$ be a length-$n$ vector of integers, called the \emph{shift} of the degrees. Then the \emph{$\vec{s}$-shifted column degree} of $\vec{\bfp}$, or simply the \emph{$\vec{s}$-column degree} of $\vec{\bfp}$, denoted as $\cdeg_{\vec{s}}\vec{\bfp}$, is defined as
	\[\cdeg_{\vec{s}}\vec{\bfp} = \max_{i=1}^n\{s_i + \deg(\bfp_i)\}.\]
	
	It is easy to see that $\cdeg\vec{\bfp} = \cdeg_{\vec{\bf 0}}\vec{\bfp}$, where $\vec{\bf 0}$ is the all-zero vector.
	
	Let $\bfA$ be an $m\times n$ polynomial matrix, then the \emph{column degree} (\emph{$\vec{s}$-column degree} resp.) of $\bfA$, denoted as $\cdeg\bfA$ ($\cdeg_{\vec{s}}\bfA$ resp.), is the length-$n$ row vector whose $i$-th entry is the column degree ($\vec{s}$-column degree resp.) of the $i$-th column of $\bfA$.
\end{definition}

We need the following theorem. It is essentially Theorem 3.7 of \cite{ZhouLS12}, where we replace the invocations of square matrix multiplication algorithms with (the faster) rectangular matrix multiplication algorithms. It is straightforward to adapt the original proof in \cite{ZhouLS12} to use rectangular matrix multiplication, but for completeness, we will include a proof in \cref{sec:unbalanced-mat-mul}.

\begin{restatable}{theorem}{ThmUnbalancedMatMul}\label{thm:unbalanced-mat-mul}
	Let $\bfA$ be an $n^p\times n^q$ polynomial matrix, and $\bfB$ be an $n^q\times n^r$ polynomial matrix. Suppose $\vec{s} \ge \cdeg\bfA$ is a shift that bounds the corresponding column degrees of $\bfA$, and 
	\[\xi = \max\mleft\{\frac{1}{n^q}\sum_{i=1}^{n^q} s_i, \frac{1}{n^r}\sum_{i=1}^{n^r} (\cdeg_{\vec{s}}\bfB)_i\mright\} + 1.\]
	Then the product $\bfA \cdot \bfB$ can be computed in $\xi\cdot n^{\omega(p, q, r) + o(1)}$ field operations.
\end{restatable}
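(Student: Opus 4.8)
The plan is to reduce this unbalanced product to a \emph{single} multiplication of two matrices all of whose entries have degree below one common bound, by two rounds of \emph{partial linearization}: one applied to the inner dimension $n^q$ (to flatten $\bfA$), and one applied to the output dimension $n^r$ (to flatten $\bfB$). Set $D:=\lceil\xi\rceil$. The numerics we rely on are $\sum_{i=1}^{n^q}(s_i+1)\le n^q\xi\le n^qD$, which follows from $\xi\ge\frac{1}{n^q}\sum_i s_i+1$, and, writing $t_j:=(\cdeg_{\vec{s}}\bfB)_j$, the analogous $\sum_{j=1}^{n^r}(t_j+1)\le n^r\xi\le n^rD$. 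We may assume $s_i,t_j\ge 0$ throughout (a column of $\bfA$ or $\bfB$ forcing a negative shift must be zero and can be dropped).

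First I would linearize the columns of $\bfA$. For each $i$, since $\deg\bfA_{\cdot,i}\le s_i<c_iD$ with $c_i:=\lceil(s_i+1)/D\rceil$, expand $\bfA_{\cdot,i}=\sum_{k=0}^{c_i-1}x^{kD}\bfA^{(k)}_{\cdot,i}$ with $\deg\bfA^{(k)}_{\cdot,i}<D$, and simultaneously replace the $i$-th row of $\bfB$ by the $c_i$ rows $x^{kD}\bfB_{i,\cdot}$. This produces $\bfA'$ of size $n^p\times N_1$ with all column degrees below $D$, and $\bfB'$ of size $N_1\times n^r$, with $\bfA\bfB=\bfA'\bfB'$ and $N_1=\sum_i c_i\le\frac{1}{D}\sum_i(s_i+1)+n^q\le 2n^q$. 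The key observation is that $\bfB'$ inherits the column-degree bound $\vec{t}$: in column $j$, the entry in the $(i,k)$-th new row is $x^{kD}\bfB_{i,j}$, of degree $kD+\deg\bfB_{i,j}\le s_i+(t_j-s_i)=t_j$, using $kD\le s_i$ (because $k<(s_i+1)/D$) and $\deg\bfB_{i,j}\le t_j-s_i$ (by definition of the $\vec{s}$-column degree).

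Next I would linearize the columns of $\bfB'$ the same way: split column $j$, of degree $\le t_j$, into $e_j:=\lceil(t_j+1)/D\rceil$ columns of degree below $D$; this does not touch $\bfA'$, and the original output column $j$ of $\bfA'\bfB'$ is recovered as $\sum_l x^{lD}\bfC_{\cdot,j^{(l)}}$ where $\bfC:=\bfA'\bfB''$ and $\bfB''$ has size $N_1\times N_2$ with $N_2=\sum_j e_j\le 2n^r$ and entries of degree below $D$. It then remains to compute the uniform product $\bfC$ of the $n^p\times N_1$ matrix $\bfA'$ and the $N_1\times N_2$ matrix $\bfB''$, whose entries have degree below $2D$: I would do this by evaluation--interpolation at $O(D)$ points taken from $\dbF$ or, if $\dbF$ is too small, from an extension of degree $O(\log_{|\dbF|}D)$. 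Evaluating all entries of $\bfA'$ and $\bfB''$, performing the $O(D)$ rectangular products of dimensions at most $(n^p,2n^q,2n^r)$, and interpolating costs $O(D)\cdot\MM(2n^p,2n^q,2n^r)+\tilde{O}(D\cdot n^{\max\{p+q,\,q+r,\,p+r\}})$; since $\MM(2n^p,2n^q,2n^r)=n^{\omega(p,q,r)+o(1)}$ and $\omega(p,q,r)\ge\max\{p+q,q+r,p+r\}$, and the recombination steps are no more expensive, the whole computation runs in $O(D)\cdot n^{\omega(p,q,r)+o(1)}=\xi\cdot n^{\omega(p,q,r)+o(1)}$ field operations, as $D\le 2\xi$ and the extension/FFT overheads contribute only $n^{o(1)}$ (the degrees involved being polynomially bounded in $n$).

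The step I expect to need the most care is the degree accounting in the first linearization --- namely the inequality $kD\le s_i$, which is exactly what lets the shifts $x^{kD}$ introduced on the expanded rows of $\bfB$ be absorbed by $s_i$, so that $\bfB'$ keeps the $\vec{s}$-column-degree structure of $\bfB$ and a second linearization suffices to finish. A minor secondary issue is that over a small base field one cannot evaluate at $2D$ distinct points; passing to an extension of logarithmic degree resolves this at an $n^{o(1)}$ cost. Up to the substitution of rectangular for square matrix multiplication this is the argument of \cite[Theorem~3.7]{ZhouLS12}, and I would carry out the routine verifications in \cref{sec:unbalanced-mat-mul}.
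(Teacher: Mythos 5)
Your argument is correct, and it takes a genuinely different route from the one in \cref{sec:unbalanced-mat-mul}. The paper's proof keeps the matrices intact and instead buckets the columns of $\bfA$ by column degree and the columns of $\bfB$ by $\vec{s}$-column degree into dyadic ranges $(2^c\xi,2^{c+1}\xi]$, observes that the block $\bfB^{c,c'}$ vanishes for $c'>c$, rebalances each surviving pair by chopping $\bfB^{c,c'}$ into $x^{\Delta}$-chunks, and then pays for $O(\log^2 n)$ rectangular products of shrunken dimensions $n^p\times n^{q-\tau}\times n^{r-\tau}$, each bounded against $\omega(p,q,r)$ via \cref{lemma:matmul1}. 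You instead perform two rounds of partial linearization: the inner-dimension expansion of $\bfA$ (with the compensating rows $x^{kD}\bfB_{i,\cdot}$) and then the column expansion of $\bfB'$, reducing everything to a \emph{single} uniform-degree product of dimensions at most $n^p\times 2n^q\times 2n^r$ handled by evaluation--interpolation at $O(\xi)$ points. The step you flagged as delicate is indeed the heart of the reduction and you verify it correctly: $kD\le s_i$ together with $\deg\bfB_{i,j}\le t_j-s_i$ is exactly what lets the $\vec{s}$-shifted column degrees of $\bfB$ become plain column-degree bounds for $\bfB'$, and the ``$+1$'' in the definition of $\xi$ is what keeps $N_1\le 2n^q$ and $N_2\le 2n^r$; the final appeal to $\omega(p,q,r)\ge\max\{p+q,q+r,p+r\}$ to absorb the evaluation, interpolation and recombination costs is standard. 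What each approach buys: yours avoids \cref{lemma:matmul1} and the $O(\log^2 n)$ bookkeeping entirely, yielding the cleaner statement that the cost is $O(\xi)$ scalar rectangular multiplications of (essentially) the original shape plus $\tilde{O}(\xi)$ times the input/output size, at the price of needing $\Theta(\xi)$ evaluation points (your logarithmic-degree field extension, or Kronecker substitution, handles this at $n^{o(1)}$ overhead, under the same implicit assumption as the paper that the degrees are polynomially bounded in $n$); the paper's proof stays within exact polynomial arithmetic over $\dbF[x]$ with no field-size concerns, and reuses \cref{lemma:matmul1}, which it needs elsewhere anyway. Either proof can replace the other without affecting the rest of the paper.
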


Now we can prove \cref{thm:invert-algo}.

\ThmInvertAlgo*

\begin{proof}[Proof Sketch]
	In this sketch, we will use some results in \cite{ZhouLS15} directly. We will also use some notation introduced in \cref{sec:informal-invert-algo}.
	
	Let $\vec{s} = \cdeg\bfF$. We first invoke the kernel basis decomposition algorithm \textsc{Inverse} of \cite{ZhouLS15}:
	\[(\bfA_1, \bfA_2, \dots, \bfA_{\log n}, \bfB) \gets \textsc{Inverse}(\bfF, \vec{s}).\]

	By \cite[Theorem 8]{ZhouLS15}, the algorithm \textsc{Inverse} takes only $\tilde{O}(dn^\omega)$ time.Then we compute
	\[\bfF^{-1} = \bfA_1\bfA_2\dots\bfA_{\log n}\bfB^{-1}.\]
	
	Note that $\bfB$ is a diagonal matrix, so it suffices to compute $\bfA_1\bfA_2\dots \bfA_{\log n}$. Also recall that for every $0\le i < \log n$, $\bfA_{i+1}$ is a block diagonal matrix that consists of $2^i$ diagonal blocks of size $(n/2^i)\times (n/2^i)$. Let $\bfA^{(j)}_{i+1}$ denote the $j$-th block, we write
	\[\bfA_{i+1} = \diag(\bfA^{(1)}_{i+1}, \dots, \bfA^{(2^i)}_{i+1}).\]
	
	Let $\bfM_i = \bfA_1\bfA_2\dots \bfA_i$. Then for every $1\le i < \log n$,
	\begin{equation}
	\bfM_{i+1} = \bfM_i\bfA_{i+1}.
	\tag{\ref{eq:Mi+1}}
	\end{equation}
	
	In order to use results in \cite[Lemma 10]{ZhouLS15}, we need to partition each $\bfA_{i+1}^{(k)}$ into two kernel bases. Like how $\bfA_1$ was formed in \cref{sec:informal-invert-algo}, we denote $\bfA_{i+1}^{(k)} = \begin{bmatrix}\bfN^{(k)}_{i+1, \sfL} & \bfN^{(k)}_{i+1, \sfR}\end{bmatrix}$. Here, each $\bfN^{(k)}_{i+1, \sfL}$ or $\bfN^{(k)}_{i+1, \sfR}$ is of dimension $(n/2^i) \times (n/2^{i+1})$. We divide $\bfM_i$ into submatrices (``column blocks'') of dimension $n\times (n/2^i)$ accordingly:
	\[
	\bfM_i = \begin{bmatrix}
	\bfM_i^{(1)} & \bfM_i^{(2)} & \dots & \bfM_i^{(2^i)}
	\end{bmatrix}.
	\]
	Then \cref{eq:Mi+1} is equivalent to
	\begin{equation}\label{eq:Mi+1-equiv}
	\bfM^{(2k-1)}_{i+1} = \bfM^{(k)}_i \cdot \bfN^{(k)}_{i+1, \sfL}, \text{ and }\bfM^{(2k)}_{i+1} = \bfM^{(k)}_i \cdot \bfN^{(k)}_{i+1, \sfR}.
	\end{equation}
	
	We use \cref{thm:unbalanced-mat-mul} to multiply these matrices. For each $1\le i < \log n$, in \cref{eq:Mi+1-equiv}, we need to perform $2^{i+1}$ matrix multiplications of the form $\bfM \cdot \bfN$. Here $\bfM=\bfM^{(k)}_i$, and $\bfN$ is either $\bfN^{(k)}_{i+1, \sfL}$ or $\bfN^{(k)}_{i+1, \sfR}$. The dimension of $\bfM$ is $n\times (n/2^i)$, and the dimension of $\bfN$ is $(n/2^i)\times (n/2^{i+1})$. Moreover, let $\vec{t} = \cdeg_{\vec{s}}\bfM_i^{(k)}$, then by \cite[Lemma 10]{ZhouLS15}:
	
	\begin{enumerate}[(a)]
		\item $\sum_{j=1}^{n/2^i}t_j \le \sum_{j=1}^n s_j \le dn$.
		\item $\sum_{j=1}^{n/2^{i+1}}(\cdeg_{\vec{t}}\bfN^{(k)}_{i+1, \sfL})_j \le \sum_{j=1}^n s_j \le dn$; similarly, $\sum_{j=1}^{n/2^{i+1}}(\cdeg_{\vec{t}}\bfN^{(k)}_{i+1, \sfR})_j \le dn$.
	\end{enumerate}
	(Recall that $\vec{s}$ is the column degree of $\bfF$.)
	
	Let
	\[\xi_i = \max\mleft\{\frac{1}{n/2^i}\sum_{j=1}^{n/2^i}t_j, \frac{1}{n/2^{i+1}}\sum_{k=1}^{n/2^{i+1}}(\cdeg_{\vec{t}}\bfN)_k\mright\} \le 2^{i+1}\cdot d.\]
	Note that we are only interested in the polynomials modulo $x^r$, thus by definition, every element in $\vec{t}$ and $\cdeg_{\vec{t}}\bfN$ should be upper bounded by $O(r)$. Therefore if $2^{i+1}d \ge r$, we use the bound $\xi_i \le O(r)$ instead. By \cref{thm:unbalanced-mat-mul}, the time complexity for computing $\bfM \cdot \bfN$ is $\xi_i \cdot n^{\omega(1, 1-\tau, 1-\tau) + o(1)}$, where $\tau = \log_n (2^{i+1})$.
	
	Let $\tau^\star = \frac{\log(r/d)}{\log n}$ be the threshold such that $2^{i+1}d\le r$ if and only if $\tau \le \tau^\star$. Suppose $2^{i+1}d \le r$, then the time complexity for computing all $2^{i+1}$ ($=n^\tau$) matrix products is
	\begin{align*}
		&~n^\tau\cdot \xi_i \cdot n^{\omega(1, 1-\tau, 1-\tau) + o(1)} \\
		\le&~d\cdot n^{2\tau + \omega(1, 1-\tau, 1-\tau) + o(1)}\\
		\le&~d\cdot n^{2\tau^\star + \omega(1, 1-\tau^\star, 1-\tau^\star) + o(1)} & \text{By \cref{lemma:matmul2}}\\
		\le&~(r^2/d)\cdot \MM(n, nd/r, nd/r) \cdot n^{o(1)}.
	\end{align*}
	On the other hand, suppose $2^{i+1}d > r$, then the time complexity for computing all $n^\tau$ matrix products is
	\begin{align*}
		&~n^\tau\cdot r\cdot n^{\omega(1, 1-\tau, 1-\tau) + o(1)}\\
		\le&~r\cdot n^{\tau^\star + \omega(1, 1-\tau^\star, 1-\tau^\star) + o(1)} & \text{By \cref{lemma:matmul2}}\\
		\le&~(r^2/d)\cdot \MM(n, nd/r, nd/r) \cdot n^{o(1)}.
	\end{align*}
	
	Summing over every $1\le i < \log n$, we can see that the time complexity for inverting $\bfF$ is at most
	\[\tilde{O}(dn^\omega)+(r^2/d)\cdot \MM(n, nd/r, nd/r) \cdot n^{o(1)}. \qedhere\]
\end{proof}

\subsection{Proof of \cref{thm:unbalanced-mat-mul}}\label{sec:unbalanced-mat-mul}

\ThmUnbalancedMatMul*

\begin{proof}
	W.l.o.g.~we assume that $n^p, n^q, n^r$ are powers of $2$. For every $1\le c\le r\log n-1$, let $\bfB^c$ denote the set of columns of $\bfB$ whose $\vec{s}$-column degrees are in the range $(2^c\xi, 2^{c+1}\xi]$; let $\bfB^0$ denote the rest columns of $\bfB$, i.e.~those with $\vec{s}$-column degrees no more than $2\xi$. Then $\bfB^0, \bfB^1, \dots, \bfB^{r\log n-1}$ form a partition of the columns of $\bfB$. By the definition of $\xi$, for every $0\le c\le r\log n-1$, there are at most $n^r/2^c$ columns in $\bfB^c$. To compute $\bfA\cdot \bfB$, it suffices to compute $\bfA\cdot \bfB^c$ for each $c$.
	
	Now fix an integer $c$, we need to compute $\bfA \cdot \bfB^c$. Using the same method above, we can also partition the columns of $\bfA$ into $q\log n$ groups. More precisely, for every $1\le c' \le q\log n-1$, let $\bfA^{c'}$ be the set of columns of $\bfA$ whose column degrees are in the range $(2^{c'}\xi, 2^{c'+1}\xi]$; let $\bfA^0$ be the rest columns of $\bfA$, i.e.~those with column degrees no more than $2\xi$. For notational convenience, we may assume that
	\[\bfA = \begin{bmatrix}\bfA^0 & \bfA^1 & \ldots & \bfA^{q\log n - 1}\end{bmatrix},\]
	as otherwise we can rearrange the columns of $\bfA$ (along with the rows of $\bfB$ and the entries in $\vec{s}$). We also note that for every $0 \le c' \le q\log n - 1$, there are at most $n^q / 2^{c'}$ columns in $\bfA^{c'}$.
	
	The partition of columns of $\bfA$ induces a partition of rows of $\bfB^c$. In particular, we define $\bfB^{c, c'}$ as the rows of $\bfB^c$ corresponding to columns of $\bfA^{c'}$, so
	\[\bfB^c = \begin{bmatrix}\bfB^{c, 0}\\ \bfB^{c, 1}\\ \vdots\\ \bfB^{c, q\log n-1}\end{bmatrix}.\]
	
	We can see that for every $c' > c$, $\bfB^{c, c'}$ is the zero matrix. In fact, suppose the entry in the $j$-th row and $k$-th column of $\bfB^c$ is nonzero, and this entry belongs to $\bfB^{c, c'}$ for some $c' > c$. Denote this column as $\bfb_k$, then $\cdeg_{\vec{s}}\bfb_k \ge s_j$. As the $j$-th column of $\bfA$ belongs to $\bfA^{c'}$, we have $s_j > 2^{c'}\xi \ge 2^{c+1}\xi$. However, by definition of $\bfB^c$, we also have $\cdeg_{\vec{s}}\bfb_k \le 2^{c+1}\xi$, a contradiction. Therefore
	\[\bfA\cdot \bfB^c = \sum_{c' = 0}^c\bfA^{c'} \cdot \bfB^{c, c'}.\]
	
	Again, fix $c' \in [0, c]$, we want to compute $\bfA^{c'} \cdot \bfB^{c, c'}$. Recall that the dimension of $\bfA^{c'}$ is at most $n^p\times (n^q/2^{c'})$, and each entry in $\bfA^{c'}$ is a polynomial of degree at most $2^{c'+1}\xi$; the dimension of $\bfB^{c, c'}$ is at most $(n^q/2^{c'}) \times (n^r/2^c)$, and each entry in $\bfB^{c, c'}$ is a polynomial of degree at most $2^{c+1}\xi$. Let $\Delta = 2^{c'+1}\xi$, we ``decompose'' $\bfB^{c, c'}$ into $\ell = 2^{c-c'}$ matrices $\{\bfB^{c, c', i}\}_{i=0}^{\ell-1}$, such that:
	\[\bfB^{c, c'} = \bfB^{c, c', 0} + \bfB^{c, c', 1}\cdot x^\Delta + \bfB^{c, c', 2}\cdot x^{2\Delta} + \dots + \bfB^{c, c', \ell-1}\cdot x^{(\ell-1)\Delta},\]
	and each entry in each matrix $\bfB^{c, c', i}$ has degree at most $\Delta$.
	
	We concatenate these degree-$\Delta$ matrices together, to form a matrix
	\[\widehat{\bfB^{c, c'}} = \begin{bmatrix}\bfB^{c, c', 0} & \bfB^{c, c', 1} & \ldots & \bfB^{c, c', \ell-1}\end{bmatrix}.\]
	This matrix has at most $(n^r/2^c) \cdot \ell \le (n^r/2^{c'})$ columns.
	
	Then we compute $\widehat{\bfC^{c, c'}} = \bfA^{c'}\cdot \widehat{\bfB^{c, c'}}$. We can see that
	\[\widehat{\bfC^{c, c'}} = \begin{bmatrix}\bfA^{c'}\bfB^{c, c', 0} & \bfA^{c'}\bfB^{c, c', 1} & \ldots & \bfA^{c'}\bfB^{c, c', \ell-1}\end{bmatrix}.\]
	And we can directly compute $\bfA^{c'}\cdot \bfB^{c, c'}$ from $\widehat{\bfC^{c, c'}}$, as
	\[\bfA^{c'} \cdot \bfB^{c, c'} = \sum_{i=0}^{\ell-1}\bfA^{c'}\bfB^{c, c', i}\cdot x^{i\cdot \Delta}.\]
	Now we finished the description of the algorithm.
	
	We analyze the time complexity. Fix constants $0\le c' \le c$, we need to multiply $\bfA^{c'}$ and $\widehat{\bfB^{c, c'}}$. Let $\tau = \log_n (2^{c'})$. In both of these matrices, the degree of every entry is at most $\Delta = O(2^{c'}\xi) = O(n^\tau\xi)$. The dimensions of $\bfA^{c'}$ and $\widehat{\bfB^{c, c'}}$ are upper bounded by $n^p\times (n^{q-\tau})$ and $(n^{q-\tau})\times (n^{r-\tau})$ respectively. Therefore the time complexity for this step is
	\[\tilde{O}\mleft(n^\tau\xi \cdot n^{\omega(p, q-\tau, r-\tau)}\mright),\]
	which is at most $\xi\cdot n^{\omega(p, q, r) + o(1)}$ by \cref{lemma:matmul1}. As we only need to consider $O(\log^2 n)$ pairs of $(c, c')$, it follows that the total time complexity of our algorithm is $\xi\cdot n^{\omega(p, q, r) + o(1)}$.
\end{proof}

\section{Computing Unique Shortest Paths in Directed Graphs}\label{sec:breaking-tie}

In this section, we show how to compute \emph{unique} shortest paths in a directed graph in $\tilde{O}(n^{2+\mu}M)$ time, matching the current best time bound for computing the all-pairs distances \cite{Zwick02}. Here $\mu < 0.5286$ is the solution of $\omega(1, 1, \mu) = 1 + 2\mu$~\cite{GallU18}. This algorithm is needed before we use \cref{lem:fast}.

We may assume that before we proceed, we have already computed the all-pairs distances $\|uv\|$ for every $u, v\in V$, using the APSP algorithm in \cite{Zwick02}.

Our tie-breaking method requires a (random) permutation $\pi$ of all vertices, or equivalently a bijection between the vertex set $V$ and $[n]$, i.e.~$\pi:V\to[n]$. According to $\pi$, for every graph $G$ on $V$ and every $u, v\in V$, we will specify a shortest path $\rho_G(u, v)$ in $G$ from $u$ to $v$ in a certain way. These shortest paths will be \emph{consistent} and \emph{easy to compute}, which is captured by the following theorem. (See also \cite[Theorem 1.3 and 1.4]{Ren20}.)

\begin{restatable}{theorem}{ThmUnique}\label{thm:breaking-tie}
    Given a graph $G$ on $V$, a representation of the set of shortest paths $\{\rho_G(u, v)\}_{u, v\in V}$ can be computed in $\tilde{O}(n^{2+\mu}M)$ time, with high probability over the random choice of permutation $\pi$, such that the following hold.
    \begin{enumerate}[({Property} a)]
		\item Let $G$ be a graph on $V$. For every $u', v'\in\rho_G(u, v)$ such that $u'$ appears before $v'$, the portion of $u'\To v'$ in $\rho_G(u, v)$ coincides with the path $\rho_G(u', v')$.\label{item:subpath-consistency}
		\item Let $G$ be a graph on $V$, $u, v\in V$, and $G'$ be a subgraph of $G$. Suppose $\rho_G(u, v)$ is completely contained in $G'$, then $\rho_{G'}(u, v) = \rho_G(u, v)$.\label{item:subgraph-consistency}
	\end{enumerate}
\end{restatable}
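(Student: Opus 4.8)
The plan is to define the canonical shortest path $\rho_G(u,v)$ by a greedy tie-breaking rule based on $\pi$, and then show both that the rule is consistent and that the resulting paths admit a compact representation computable in $\tilde O(n^{2+\mu}M)$ time. Concretely, I would define $\rho_G(u,v)$ recursively: if $u=v$, the path is trivial; otherwise, among all vertices $w$ such that $(u,w)\in E$ and $\|uw\|+\|wv\|=\|uv\|$ (i.e.\ $w$ lies on some shortest $u\To v$ path as the first step), pick the one minimizing $\pi(w)$, call it $w^\star=\mathrm{next}_G(u,v)$, and set $\rho_G(u,v)=(u,w^\star)\cdot\rho_G(w^\star,v)$. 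This is well-defined because each step strictly decreases $\|uv\|$ (edge weights are positive), so the recursion terminates; and it picks out a genuine shortest path. The key virtue of this ``first-edge, smallest-$\pi$'' rule is that it is \emph{prefix-determined}: the successor of $u$ on $\rho_G(u,v)$ depends only on $(u,v)$ and the distance function, not on the rest of the path.

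For Property~(\ref{item:subpath-consistency}) (subpath consistency), suppose $u',v'\in\rho_G(u,v)$ with $u'$ before $v'$. Walking along $\rho_G(u,v)$ from $u'$, the next vertex is $\mathrm{next}_G(u,v)$ evaluated with source $u'$ — but since the remaining suffix of $\rho_G(u,v)$ from $u'$ is itself built by the same recursion starting at $u'$ toward the final endpoint $v$, and any vertex $v'$ on it satisfies $\|u'v'\|+\|v'v\|=\|u'v\|$, the portion from $u'$ to $v'$ is exactly $\rho_G(u',v')$ by an easy induction on $\|u'v\|$. For Property~(\ref{item:subgraph-consistency}) (subgraph consistency): if $\rho_G(u,v)\subseteq G'$, then along that path every intermediate distance in $G'$ equals the corresponding distance in $G$ (it cannot be smaller since $G'\subseteq G$, and it is achieved in $G'$), and the edge set of $G'$ only \emph{shrinks} the candidate set for each $\mathrm{next}$ choice — but the candidate actually chosen in $G$, namely the next vertex on $\rho_G(u,v)$, still lies in $G'$ with the same (optimal) distance value, so it remains a valid candidate in $G'$; moreover it is still the $\pi$-minimizer among the (smaller) $G'$-candidate set. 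Hence $\mathrm{next}_{G'}(u,v)=\mathrm{next}_G(u,v)$ at every step, giving $\rho_{G'}(u,v)=\rho_G(u,v)$.

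The representation to compute and store is the ``successor matrix'' $\mathrm{next}_G(u,v)$ for all pairs $u,v$; from it any path is recovered in $O(|\rho|)$ time, and the incoming/outgoing shortest path trees rooted at each vertex are read off directly (the outgoing tree at $v$ uses the reverse-direction analogue $\mathrm{prev}_G(u,v)$, defined symmetrically by the last edge). The main obstacle — and the only place real work is needed — is computing $\mathrm{next}_G(u,v)$ for all pairs within the APSP-matching budget $\tilde O(n^{2+\mu}M)$: naively, for each $(u,v)$ one scans all out-neighbors $w$ of $u$ testing $\|uw\|+\|wv\|=\|uv\|$, which is $\Theta(n^2m)$ in the worst case. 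To beat this I would reduce the computation of $\mathrm{next}$ to a small number of \emph{min-plus}-type matrix products over the distance matrix: for a fixed weight value $\ell\in\{1,\dots,M\}$, the vertices $w$ with $(u,w)\in E$ of weight $\ell$ and $\|uw\|+\|wv\|=\|uv\|$ contribute, and one wants, for each $(u,v)$, the minimum-$\pi$ such $w$ — this is exactly the kind of witness/minimum-witness problem for Boolean or bounded-integer matrix products that Zwick's techniques (and the rectangular matrix multiplication machinery already invoked in the paper) solve in $\tilde O(n^{2+\mu}M)$ time; summing over the $M$ weight classes keeps the bound. I would cite \cite{Zwick02} and the witness-finding results there for this step, and note that the random permutation $\pi$ is what makes the ``minimum-$\pi$ witness'' behave like a uniformly random witness, which is what the sampling-based witness algorithms need for their high-probability guarantee — this is the sole source of the ``with high probability over $\pi$'' qualifier.
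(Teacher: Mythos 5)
Your tie-breaking rule (``first-edge, smallest-$\pi$'') is a legitimate alternative to the paper's rule, and your consistency arguments do go through: the key facts you implicitly use --- that the candidate set for $(u',v')$ is contained in the candidate set for $(u',v)$ whenever $\|u'v'\|+\|v'v\|=\|u'v\|$, and that the $G$-chosen successor remains a candidate in a subgraph containing the path --- are correct, so Properties (a) and (b) hold for your $\rho_G$ just as they do for the paper's midpoint-based definition. (Two small repairs: the candidate condition must use the edge weight $l(u,w)$ rather than $\|uw\|$, since otherwise the concatenated path need not be shortest; and you cannot define the outgoing trees by a \emph{symmetric} ``last-edge min-$\pi$'' rule, because those trees would then encode different $u\To v$ paths than the incoming trees, violating the mutual consistency of all $2n$ trees that \cref{lem:fast} needs. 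The outgoing trees must record the last edge of the \emph{same} paths $\rho_G(u,v)$, which is recoverable from your successor matrix by an $O(n^2)$ dynamic program over pairs in increasing distance order.)

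The genuine gap is the running time, which is the heart of the theorem. You assert that computing $\mathrm{next}_G(u,v)$ for all pairs is ``exactly the kind of minimum-witness problem'' solvable in $\tilde O(n^{2+\mu}M)$, but no such reduction is given and the obvious ones fail. For a fixed weight class $\ell$ the condition $l(u,w)=\ell$ and $\|wv\|=\|uv\|-\ell$ is an \emph{equality between distance values that depend on the pair $(u,v)$}, with distances as large as $nM$; it is not a Boolean product, and as a min-plus witness problem the entries are unbounded, so the blocking technique of Kowaluk--Lingas/Shapira--Yuster--Zwick costs a factor of the distance range ($nM$), not $M$. The paper's rule is chosen precisely to avoid this: because \emph{any} intermediate vertex on \emph{any} shortest path is a candidate, a random permutation guarantees (\cref{claim:breaking-tie-hitting-set}) that for pairs with $\|uv\|\approx r$ the minimum-$\pi$ candidate lies in a set $\caH_r$ of size $\tilde O(Mn/r)$ and the relevant distances can be capped at $2r$; the min-plus products then cost $\tilde O(r\cdot\MM(n,s,n))$ per block with only $\tilde O(Mn/(rs))$ blocks, and the factors of $r$ cancel. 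Under your first-edge rule the witness must be an out-neighbor of $u$, a candidate set that can be small and is fixed by the graph, so the hitting-set trick is unavailable and the distances cannot be truncated. Your stated role of randomness (``makes the minimum-$\pi$ witness behave like a uniformly random witness, which is what sampling-based witness algorithms need'') is also not right: those algorithms find \emph{some} witness, not a minimum one, and in the paper the randomness of $\pi$ is used only for the hitting-set claim. As it stands, the $\tilde O(n^{2+\mu}M)$ bound for your representation is unsupported, and this is exactly the part of the theorem that required a new idea.
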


From (Property \ref{item:subpath-consistency}), for every vertex $u$, the shortest paths from $u$ to every other vertex in $G$ form a tree, and we call this tree the \emph{outgoing shortest path tree} rooted at $u$, denoted as $\Tout(u)$. Similarly, the shortest paths to $u$ from every other vertex in $G$ also form a tree, and we call this tree the \emph{incoming shortest path tree} rooted at $u$, denoted as $\Tin(u)$. Actually, the ``representation'' computed is exactly the set of $n$ outgoing shortest path trees $\{\Tout(u)\}_{u\in V}$ and the set of $n$ incoming shortest path trees $\{\Tin(u)\}_{u\in V}$.

\paragraph{The rest of this section.} We first define the paths $\rho_G(u, v)$ in \cref{sec:def-rhouv}. Then we explain how to compute them efficiently in \cref{sec:computing-rhouv}, by presenting an algorithm that computes the incoming and outgoing shortest path trees in $\tilde{O}(Mn^{2+\mu})$ time. Finally, we prove (Property~\ref{item:subpath-consistency}) and (Property~\ref{item:subgraph-consistency}) in \cref{sec:proof-of-consistency}.

\subsection{Defining $\rho_G(u, v)$}
\label{sec:def-rhouv}
Let $G$ be an input graph, and $\pi:V\to[n]$ be a (random) bijection. Let $u, v\in V$, $P$ be a path from $u$ to $v$, we will say that any vertex on $P$ that is neither $u$ nor $v$ is an \emph{internal vertex of $P$}.

Recall that we defined $|uv|$ as the \emph{largest} number of edges in any shortest path from $u$ to $v$. In particular: \begin{itemize}
	\item $|uv| = 0$ if and only if $u = v$;
	\item $|uv| = 1$ if and only if the edge $(u,v)$ is the \emph{only} shortest path from $u$ to $v$;
	\item $|uv| = \infty$ if and only if there is no path from $u$ to $v$ in $G$;
	\item otherwise, we have $2 \le |uv| < \infty$.
\end{itemize}

We claim that the set of vertices mapped to small values by $\pi$ is a good ``hitting set'' w.h.p:

\begin{claim}\label{claim:breaking-tie-hitting-set}
	Fix the graph $G$. For some large constant $C$, with high probability over the choice of $\pi$, the following holds. For every pair of vertices $u, v\in V$ such that $2 \le |uv| < \infty$, there is a shortest path $\rho'(u, v)$ from $u$ to $v$, and an internal vertex $z$ on $\rho'(u, v)$, such that $\pi(z) \le CMn\ln n/\|uv\|$.
\end{claim}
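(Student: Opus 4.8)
The plan is to fix the pair $u, v$ with $2 \le \|uv\| < \infty$ and $|uv| \ge 2$, and show that a random $\pi$ fails the stated property with probability much smaller than $n^{-C'}$ for some large $C'$; then take a union bound over the $O(n^2)$ pairs. The key point is that there need not be a single short path from $u$ to $v$ — indeed $|uv|$ could be as large as $\Theta(n)$ — so I cannot just argue about one path's internal vertices being hit. Instead I will build a large \emph{family} of shortest paths from $u$ to $v$ that are "spread out" enough that any one choice of $\pi$ leaves many of them with a low-$\pi$ internal vertex. Concretely, first I would fix \emph{one} shortest path $P$ from $u$ to $v$, say $u = w_0, w_1, \dots, w_k = v$ with $k = |p| \ge 2$ edges and total weight $\|uv\|$. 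Since every edge has weight at least $1$, we have $k \le \|uv\|$, and the internal vertices are $w_1, \dots, w_{k-1}$, a set of size $k - 1 \ge 1$.

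The next step is the averaging/hitting-set argument. Set $t = CMn\ln n / \|uv\|$ and let $S = \{z \in V : \pi(z) \le t\}$; since $\pi$ is a uniformly random bijection, $S$ is a uniformly random subset of $V$ of size $\lfloor t \rfloor$. For a \emph{single} fixed path $P$ with internal-vertex set $I_P$ of size $k - 1$, the probability that $S$ misses $I_P$ entirely is $\binom{n - |I_P|}{|S|}/\binom{n}{|S|} \le (1 - |S|/n)^{|I_P|} \le \exp(-|S|\,|I_P|/n)$. If $|I_P| = k - 1$ were guaranteed to be $\Omega(\|uv\|/M)$ — equivalently, if the shortest path had $\Omega(\|uv\|/M)$ edges — this bound would be $\exp(-\Omega(C\ln n)) = n^{-\Omega(C)}$, and a union bound over $n^2$ pairs finishes it. The only gap is that a shortest path might have very few edges (e.g.\ $k = 2$) even when $\|uv\|$ is large, in which case $|I_P|$ is tiny and a fixed $P$ is easily missed.

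The hard part, and the heart of the claim, is handling exactly that case: when the \emph{specific} path $P$ I picked is short (few edges, hence few internal vertices) but $\|uv\|$ is large. Here I would use the definition of $|uv|$ as the \emph{largest} number of edges in any shortest path: since $|uv| \ge 2$, and in fact if $P$ has few edges I instead select a shortest path $\rho'$ achieving $|uv|$ edges, so it has $|uv| - 1$ internal vertices — but $|uv|$ can still be as small as $2$ regardless of $\|uv\|$. So this alone does not resolve it either; the real tool must be that there are \emph{many} distinct shortest $u \To v$ paths whose internal vertices collectively cover many vertices, \emph{or} I restrict attention to the case $|uv| \ge 2$ and observe that having two edges already forces one internal vertex $z$ with $\|uz\| + \|zv\| = \|uv\|$ and $\|uz\|, \|zv\| \ge 1$; then I count, over all such "midpoint" vertices $z$, how many there are. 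The cleanest route I anticipate: let $Z$ be the set of all vertices $z$ that lie on \emph{some} shortest $u \To v$ path, i.e.\ $\|uz\| + \|zv\| = \|uv\|$ with $z \notin \{u,v\}$; the claim only needs \emph{one} such $z$ with $\pi(z) \le t$, and one such $z$ always exists (since $|uv| \ge 2$), so $|Z| \ge 1$. To get the probability bound I need $|Z|$ to be large — specifically $|Z| = \Omega(\|uv\|/M)$ — which I would prove by picking a shortest path $\rho'$ with $|uv|$ edges and noting its $|uv|-1 \ge \|uv\|/M - 1$ internal vertices (using that each of the $|uv|$ edges has weight $\le M$, so $|uv| \ge \|uv\|/M$) all lie in $Z$; hence $|Z| \ge \|uv\|/M - 1 = \Omega(\|uv\|/M)$ for $\|uv\| \ge 2M$, say.

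Finally I would assemble: for each pair $(u,v)$, the bad event is "$S \cap Z = \emptyset$", which has probability at most $\exp(-|S|\,|Z|/n) \le \exp\bigl(-(CMn\ln n/\|uv\| - 1)\cdot(\|uv\|/M - 1)/n\bigr) \le \exp(-\Omega(C\ln n)) = n^{-\Omega(C)}$, taking $C$ large enough and handling the small-$\|uv\|$ regime (say $\|uv\| < 2M$) separately where $t \ge Cn\ln n / 2 \ge n$ so $S = V$ and the event is impossible. A union bound over the at most $n^2$ pairs $(u,v)$ gives failure probability $n^{2 - \Omega(C)} \le 1/n^{c}$ for the desired constant $c$ by choosing $C$ appropriately, proving the claim. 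The one subtlety to be careful about is the exact constant bookkeeping relating $|S| = \lfloor CMn\ln n/\|uv\| \rfloor$, $|Z| \ge \|uv\|/M - 1$, and the exponent — I will want $|S|\cdot|Z|/n \ge (C/2)\ln n$ after absorbing the $-1$ terms, which holds once $\|uv\| \ge 2M$ and $C$ is a sufficiently large absolute constant.
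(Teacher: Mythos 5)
Your proposal is correct and is essentially the paper's argument: the low-$\pi$ vertices form a (near-)uniform random set of size about $CMn\ln n/\|uv\|$, any shortest path has $\Omega(\|uv\|/M)$ internal vertices because each edge weight is at most $M$, so this set hits one of them except with probability $n^{-\Omega(C)}$, and a union bound over the $O(n^2)$ pairs plus the trivial small-$\|uv\|$ case finishes it. The mid-proof worry about a shortest path with few edges despite large $\|uv\|$ is vacuous for exactly the reason you later invoke (edge weights are bounded by $M$), so your detour through the set $Z$ and the maximum-edge-count path is harmless but unnecessary --- the paper simply fixes an arbitrary shortest path.
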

\begin{proof}
	Fix two vertices $u, v\in V$, and any shortest path $\rho'(u, v)$ from $u$ to $v$. Denote $r=\|uv\|$, if $r\le M\ln n$ then the claim is trivial. Otherwise, there are at least $r/1.1M$ vertices on $\rho'(u, v)$. Therefore, the probability over a random bijection $\pi:V\to [n]$ that $\pi$ maps every vertex on $\rho'(u, v)$ to an integer greater than $CMn\ln n/r$ is at most
	\[(1-CM\ln n/r)^{r/1.1M} \le 1/n^{C/1.1}.\]
	Thus by a union bound, the probability that the above condition holds (for every $u,v$) is at least $1-1/n^{C/1.1-2}$, which is a high probability.
\end{proof}

Let $u, v\in V$ such that $2 \le |uv| < \infty$.  Define $w(u, v)$ as the intermediate vertex with the smallest label in any shortest path from $u$ to $v$, i.e.
\begin{equation}\label{eq:def-of-wuv}
	w(u, v) = \arg_w\min\{\pi(w) : \|uv\| = \|uw\| + \|wv\|, w\ne u\text{ and }w\ne v\}.
\end{equation}

\cref{claim:breaking-tie-hitting-set} states that w.h.p.~for every vertices $u, v\in V$ such that $2 \le |uv| < \infty$, we have that
\begin{equation}\label{eq:wuv-is-small}
	\pi(w(u, v)) \le CMn\ln n / \|uv\|.
\end{equation}

In the rest of this section, we assume that \cref{eq:wuv-is-small} holds for every vertices $u, v\in V$ such that $2 \le |uv| < \infty$. Now we define the paths $\rho_G(u, v)$.

\begin{definition}
	Let $u, v\in V$ such that $|uv| \ne \infty$. The path $\rho_G(u, v)$ is recursively defined as follows. \begin{itemize}
		\item If $u = v$, then $\rho_G(u, v)$ is the empty path that starts and ends at $u$.
		\item If $|uv| = 1$, then $\rho_G(u, v)$ consists of a single edge, i.e.~the edge from $u$ to $v$.
		\item Otherwise, let $w = w(u, v)$, then $\rho_G(u, v)$ is the concatenation of $\rho_G(u, w)$ and $\rho_G(w, v)$.
	\end{itemize}
\end{definition}

For every $u, v$ such that $2 \le |uv| < \infty$, since $w$ is an intermediate vertex on some shortest path from $u$ to $v$, it is easy to see that $|uw| < |uv|$ and $|wv| < |uv|$. Therefore $\rho_G(u, v)$ is well defined --- it is inductively defined in the nondecreasing order of $|uv|$.

\subsection{Computing Shortest Path Trees in $\tilde{O}(Mn^{2+\mu})$ Time}
\label{sec:computing-rhouv}

We will need the following classical algorithm for computing distance products:
\begin{lemma}[\cite{Zwick02}]\label{lemma:distance-product-algo}
	Let $A$ be an $n\times m$ matrix, and $B$ be an $m\times n$ matrix. Suppose every entry in $A$ or $B$ is either $+\infty$ or an integer with absolute value at most $M$. Then the distance product of $A$ and $B$ can be computed in $\tilde{O}(M\cdot \MM(n, m, n))$ time.
\end{lemma}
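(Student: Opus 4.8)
The plan is to use the standard reduction from $(\min,+)$-products to ordinary matrix multiplication over $\dbZ$, exploiting the fact that all finite entries are bounded in absolute value by $M$. First I would shift the entries so everything is nonnegative: replace each finite entry $A_{i,k}$ by $A_{i,k}+M \in [0,2M]$ and each finite entry $B_{k,j}$ by $B_{k,j}+M \in [0,2M]$; then the true distance product entry $(A\star B)_{i,j}=\min_k (A_{i,k}+B_{k,j})$ is recovered by subtracting $2M$ from the shifted product, so it suffices to compute $\min_k(A'_{i,k}+B'_{k,j})$ where $A',B'$ have entries in $\{0,1,\dots,2M\}\cup\{+\infty\}$. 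Now encode each nonnegative integer value $v$ appearing in $A'$ as the monomial $x^{v}$, i.e.\ build a polynomial-valued matrix $\widehat{A}$ with $\widehat{A}_{i,k}=x^{A'_{i,k}}$ (and the zero polynomial when $A'_{i,k}=+\infty$), and similarly $\widehat{B}_{k,j}=x^{B'_{k,j}}$ or $0$. Compute the ordinary matrix product $\widehat{C}=\widehat{A}\cdot\widehat{B}$ over $\dbZ[x]$. The key observation is that $\widehat{C}_{i,j}=\sum_k x^{A'_{i,k}+B'_{k,j}}$ is a polynomial of degree at most $4M$ with nonnegative integer coefficients, and its lowest-degree monomial has exponent exactly $\min_k(A'_{i,k}+B'_{k,j})$; so reading off $\deg^*_x(\widehat{C}_{i,j})$ (which is finite iff some $k$ contributes, i.e.\ iff the product entry is finite) and subtracting $2M$ gives $(A\star B)_{i,j}$.

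The remaining point is the running time. Each entry of $\widehat{A},\widehat{B}$ is a single monomial, and each entry of $\widehat{C}$ is a polynomial of degree $O(M)$; a product of two such polynomials (monomial times monomial) is trivial, and summing $m$ of them to form one entry of $\widehat C$ costs $O(M+m)$ word operations if one keeps a length-$O(M)$ coefficient array. More cleanly, I would reduce directly to integer arithmetic: replace $x$ by a sufficiently large integer $N$ (say $N=m+1$, so that no coefficient in any $\widehat{C}_{i,j}$ can reach $N$ and there is no "carry" between blocks of $O(M)$ digits), giving an instance of ordinary integer matrix multiplication with $n\times m$ and $m\times n$ matrices whose entries are $O(M\log m)$-bit integers. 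By the standard blocking argument, multiplying matrices with $b$-bit integer entries costs $\tilde{O}(b)\cdot\MM(n,m,n)$ bit/word operations (one packs $\Theta(\log n)$ bits per machine word when the true entry sizes are small, or just runs $\mathrm{poly}(b)$ passes of Boolean matrix multiplication — the $\tilde O$ hides the $\mathrm{polylog}$ overhead), which here is $\tilde{O}(M)\cdot\MM(n,m,n)$ as claimed. Finally, extracting $\deg^*_x$ (equivalently, the lowest nonzero base-$N$ digit) from each of the $n^2$ output integers costs $\tilde O(M)$ each, i.e.\ $\tilde O(Mn^2)$ total, which is absorbed.

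The only mildly delicate step — and the one I would be most careful about — is the bound on coefficient sizes that licenses the polynomial-to-integer substitution: I must ensure that when the monomials are superposed via the integer $N$, the coefficients never overflow into the next degree block, which needs $N$ strictly larger than the maximum possible coefficient sum $m$ (the number of terms), and I must double-check that the degrees involved ($\le 4M$ after shifting, or $\le 2M$ without the intermediate shift if one instead handles signs by a single global offset) stay $O(M)$ so that the per-entry integer has only $O(M\log m)=\tilde O(M)$ bits. Everything else is routine. I would state the algorithm for the shifted, nonnegative version and remark that the case of entries in $[-M,M]$ follows by the offset described above with $M$ replaced by $O(M)$, which does not affect the asymptotic bound.
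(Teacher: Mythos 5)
Your proposal is correct, and it is essentially the standard argument behind this lemma, which the paper itself does not reprove but simply imports from \cite{Zwick02} (there the entries are encoded as powers $x^{M-a_{ij}}$ and the maximum degree is read off, whereas you shift and read the minimum degree --- a cosmetic difference only). The encoding, the no-carry condition $N>m$, the $O(M\log m)$-bit entry bound, and the $\tilde{O}(M)\cdot\MM(n,m,n)$ accounting all match the cited proof; only the parenthetical alternative of ``$\mathrm{poly}(b)$ passes of Boolean matrix multiplication'' would not by itself give the claimed bound, but your main argument does not rely on it.
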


\paragraph{Computing $w(u, v)$.} We first show how to compute $w(u, v)$ for every $u, v\in V$ such that $2 \le |uv| < \infty$ in $\tilde{O}(Mn^{2+\mu})$ time. Then we use the values of all $w(u, v)$ to compute the incoming and outgoing shortest path trees in $\tilde{O}(n^2)$ additional time. Our strategy for computing $w(u, v)$ is to mimic the algorithm in \cite{KowalukL05, ShapiraYZ11} for computing maximum witness of Boolean matrix multiplication. In particular, we divide the possible witnesses into blocks, and use fast matrix multiplication algorithms to find the block containing $w(u, v)$, for every $u, v$. After that, we use brute force to find $w(u, v)$ inside that block. Details follow.

Let $r = 2^k$ be a parameter, we show how to compute $w(u, v)$ for every pair of vertices $u, v\in V$ such that $r \le \|uv\| < 2r$. Let
\[\caH_r = \{z \in V : \pi(z) \le CMn\ln n/r\}.\]
By \cref{claim:breaking-tie-hitting-set}, for every vertices $u,v$ such that $\|uv\| \in [r, 2r)$, we have $w(u, v)\in\caH_r$. 

We define an $n\times |\caH_r|$ matrix $A$ and an $|\caH_r|\times n$ matrix $B$ as follows. For every $u\in V$ and $z\in \caH_r$, we define
\[A[u, z] = \begin{cases} \|uz\| & \text{if }\|uz\| \le 2r\text{ and }u\ne z\\ +\infty & \text{otherwise}\end{cases},\text{ and }
B[z, u] = \begin{cases}\|zu\| & \text{if }\|zu\| \le 2r\text{ and }u\ne z\\ +\infty & \text{otherwise}\end{cases}.\]
Then we compute the \emph{minimum witness} of the distance product $A\star B$. To be more precise, we compute the matrix $W[\cdot, \cdot]$ such that for every $u, v\in V$,
\[W[u, v] = \arg_z\min \{\pi(z) : \|uv\| = A[u, z] + B[z, v]\}.\]

\subparagraph{Correctness.} Fix $u, v\in V$, where $\|uv\| \in [r, 2r)$. We will show that if $|uv| = 1$, then $W[u, v]$ does not exist; otherwise $W[u, v]$ coincides with $w(u, v)$ defined in \cref{eq:def-of-wuv}.

First, suppose $|uv| = 1$, then there are no intermediate vertex $z$ such that $\|uv\| = \|uz\| + \|zv\|$, which means $W[u, v]$ does not exist.

Now we assume $|uv| \ge 2$. Since $\|uv\| \ge r$, by \cref{claim:breaking-tie-hitting-set}, there is an intermediate vertex $z\in\caH_r$ such that $\|uz\| + \|zv\| = \|uv\|$. Since $\|uz\|, \|zv\| \le \|uv\| < 2r$, we can see that $\|uv\| = A[u, z] + B[z, v]$, therefore $W[u, v]$ exists. Let $z = W[u, v]$, then by \cref{eq:def-of-wuv}, $\pi(w(u, v)) \le \pi(z)$. On the other hand, \cref{claim:breaking-tie-hitting-set} shows that $w(u, v)\in\caH_r$, so by the definition of $z = W[u, v]$, we have $\pi(z) \le \pi(w(u, v))$. Therefore $z = w(u, v)$ and we have established the correctness of $W[\cdot, \cdot]$.

\subparagraph{Time complexity.} Now we show how to compute the matrix $W[\cdot, \cdot]$ efficiently.

Let $s = n^\mu$, where $\mu \in (0, 1)$ is a parameter to be determined later. If $|\caH_r| < s$, then we can compute the matrix $W$ by brute force in $\tilde{O}(n^2s)$ time. Otherwise, we partition $\caH_r$ into blocks of size $s$, where the $i$-th block contains vertices that are mapped by $\pi$ to values between $(i-1)\cdot s+1$ and $i\cdot s$. For every block $i$, we compute the distance product of $A$ and $B$ where only vertices in block $i$ are allowed as witnesses. In other words, we compute the following matrix
\[D^i[u, v] = \min\{A[u, z] + B[z, v] : (i-1)\cdot s+1 \le \pi(z) \le i\cdot s\}.\]
By \cref{lemma:distance-product-algo}, this matrix can be computed in $\tilde{O}(r\cdot \MM(n, s, n))$ time. There are $O(|\caH_r|/s)=\tilde{O}(Mn/(rs))$ blocks, and we need to compute a distance product $D^i$ for each block $i$. Therefore the total time for computing all these distance products is
\[\tilde{O}(r\cdot \MM(n, s, n) \cdot Mn/(rs)) = \tilde{O}(M\cdot (n/s)\cdot \MM(n, s, n)).\]

Now for every $u, v\in V$ such that $\|uv\| \in [r, 2r)$ and $|uv|\ge 2$, we want to compute $W[u, v]$, which is the vertex $z\in\caH_r$ with the minimum $\pi(z)$, such that $\|uv\| = A[u, z] + B[z, v]$. First, we find the smallest $i$ such that $D^i[u, v] = \|uv\|$, and we know that $W[u, v]$ is in the $i$-th block. (If such $i$ does not exist, then $W[u, v]$ does not exist either, and $|uv| = 1$.) This step takes $\tilde{O}(Mn/(rs))$ time. Then we iterate through the vertices in this block, and find the vertex $z$ with the smallest $\pi(z)$ such that $A[u, z] + B[z, v] = \|uv\|$. This step takes $O(s)$ time.

It follows that the time complexity for computing every $w(u, v)$ where $\|uv\| \in [r, 2r)$ is
\begin{align}
&\,\tilde{O}(M\cdot \MM(n, s, n) \cdot (n/s) + n^2\cdot Mn/(rs) + n^2s)\nonumber\\
\le&\,\tilde{O}(M\cdot\MM(n, s, n)\cdot (n/s) + n^2s)\label{eq:step1}\\
\le&\,\tilde{O}(M\cdot n^{\omega(1, \mu, 1) + 1 - \mu} + n^{2+\mu})\nonumber.
\end{align}

Here, \cref{eq:step1} is because $n^2\cdot Mn/(rs) \le n^2\cdot M \cdot (n/s)\le M\cdot \MM(n, s, n)\cdot (n/s)$.

Let $\mu$ be the solution to $\omega(1, \mu, 1) = 1 + 2\mu$, then $\mu < 0.5286$ (\cite{Zwick02, GallU18}). It follows that the time complexity for computing every $w(u, v)$, where $r \le \|uv\| < 2r$, is at most $\tilde{O}(Mn^{2 + \mu})$.

\subparagraph{Putting it together.} We run the above algorithm for $k$ from $0$ to $\lfloor\log (nW)\rfloor$, and for each $k$, we update the values $w(u, v)$ where $\|uv\| \in [2^k, 2^{k+1})$. The total time to compute $w(u, v)$ for all $u, v$ is thus $\tilde{O}(Mn^{2 + \mu})$.

\def\parent{\mathsf{parent}}
\paragraph{From $w(u, v)$ to unique shortest paths.} For every $u, v\in V$, we will compute the parent of $u$ in the tree $\Tin(v)$, denoted as $\parent_v(u)$. In other words, $\parent_v(u)$ is the second vertex in the path $\rho_G(u, v)$ (the first being $u$). After computing $\parent_v(u)$ for every $u, v\in V$, it is easy to construct $\Tin(v)$ for every vertex $v$. We can compute every $\Tout(u)$ in a symmetric fashion.

We proceed by nondecreasing order of $\|uv\|$. Suppose that for every $(u', v')$ such that $\|u'v'\| < \|uv\|$, we have already computed $\parent_{v'}(u')$. Now we compute $\parent_v(u)$ as follows. Let $w = w(u, v)$. If $w$ does not exist, let $\parent_v(u) = v$; otherwise $\parent_v(u) = \parent_w(u)$. 

This algorithm (that given every $w(u, v)$, computes every $\parent_v(u)$) clearly runs in $\tilde{O}(n^2)$ time. Notice that if $w$ exists, then $w$ is an intermediate vertex in $\rho_G(u, v)$, thus $\|uw\| < \|uv\|$, and the second vertex in the path $\rho_G(u, v)$ coincides with the second vertex in the path $\rho_G(u, w)$. Hence, the correctness of the algorithm can be easily proved by induction on $\|uv\|$.

\subsection{Proof of \cref{thm:breaking-tie}}
\label{sec:proof-of-consistency}
\ThmUnique*
In this subsection, for any path $P$ and vertices $u',v'\in P$ such that $u'$ appears before $v'$ on $P$, we use $P[u',v']$ to denote the portion of $u'\To v'$ on the path $P$.

\begin{proof}[Proof of (Property \ref{item:subpath-consistency})]
    We prove it by induction on the number of edges of $\rho_G(u,v)$. Let $P=\rho_G(u, v)$. If $u=v$ or $P$ has only one edge, (Property \ref{item:subpath-consistency}) is trivial. Now suppose $P$ has $k$ edges where $k > 1$. Let $w = w(u, v)$, then $w$ must lie on $P$. Consider the following three cases:\begin{itemize}
        \item Suppose $u'$ appears after (or coincides with) $w$ on $P$. By definition, $P[w,v] = \rho_G(w,v)$. Then $P[u',v'] = \rho_G(u',v')$ by induction hypothesis on $\rho_G(w,v)$ since it has fewer edges than $\rho_G(u,v)$.
        \item Suppose $v'$ appears before (or coincides with) $w$. This case is symmetric to the above case.
        \item Otherwise, $w$ lies between $u'$ and $v'$ on $P$.
        
        First, we claim that $w = w(u', v')$. As $w$ lies on some shortest path from $u'$ to $v'$ (i.e.~$P[u', v']$), we have $\pi(w(u', v')) \le \pi(w)$. On the other hand, suppose there exists $w'$ such that $\pi(w')<\pi(w)$ and $w'$ is on some shortest path from $u'$ to $v'$. Then $w'$ also lies on some shortest path from $u$ to $v$, so it is a better candidate for $w(u, v)$, contradicting the definition of $w$.
        
        Second, by induction hypothesis on $\rho_G(u, w)$, which has fewer edges than $\rho_G(u, v)$, we have $P[u',w] = \rho_G(u',w)$. Similarly, $P[w,v']=\rho_G(w, v')$. Therefore, by definition, $P[u',v']=P[u',w]\circ P[w,v']=\rho_G(u',v')$.\qedhere
    \end{itemize}
\end{proof}

\begin{proof}[Proof of (Property \ref{item:subgraph-consistency})]
    We prove it by induction on the number of edges of $\rho_G(u,v)$. Let $P=\rho_G(u, v)$. If $u=v$ or $P$ has only one edge, (Property \ref{item:subgraph-consistency}) is trivial.
    
    Now suppose $P$ has more than one edge. Let $w=w_G(u,v)$ (i.e.~the vertex $w(u, v)$ defined in \cref{eq:def-of-wuv} in graph $G$), we claim that $w$ coincides with $w_{G'}(u,v)$ (i.e.~the vertex $w(u, v)$ defined in \cref{eq:def-of-wuv} in graph $G'$). Since $P$ is also a shortest path from $u$ to $v$ in $G'$, we have $\pi(w_{G'}(u, v)) \le \pi(w)$. On the other hand, suppose there exists $w'$ such that $\pi(w') < \pi(w)$ and $w'$ is on some shortest path from $u$ to $v$ in $G'$. Then $w'$ also lies on some shortest path from $u$ to $v$ in $G$, so it is a better candidate for $w_G(u,v)$, contradicting the definition of $w$.
    
    Since $\rho_G(u, w)$ has fewer edges than $\rho_G(u, v)$, and $\rho_G(u, w)$ is completely contained in $G'$, we can use induction hypothesis on $\rho_G(u, w)$ to conclude that $P[u, w]=\rho_{G'}(u, w)$. Similarly, we can use the induction hypothesis on $\rho_G(w, v)$ to conclude that $P[w, v]=\rho_{G'}(w, v)$. Therefore, by definition, $\rho_{G'}(u,v) = \rho_{G'}(u, w) \circ \rho_{G'}(w, v) = P$.
\end{proof}
\section{Conclusions and Open Problems}
We presented an improved DSO for directed graphs with integer weights in $[1, M]$. The preprocessing time is $O(n^{2.5794}M)$ and the query time is $O(1)$. However, there is still a small gap between the preprocessing time of our DSO and the current best time bound for the APSP problem in directed graphs, which is $\tilde{O}(n^{2+\mu}M) \le O(n^{2.5286}M)$ \cite{Zwick02}. Can we improve the preprocessing time to $\tilde{O}(n^{2+\mu} M)$, matching the latter time bound? Another interesting problem is to investigate the complexity of preprocessing a DSO in undirected graphs --- here, the best time bound for APSP is $\tilde{O}(n^\omega M)$ \cite{Seidel95, ShoshanZ99}. Can we preprocess a DSO in $\tilde{O}(n^\omega M)$ time on undirected graphs?

Compared to other DSOs \cite{WeimannY13, GrandoniW12, ChechikC20}, our oracle has two drawbacks. First, our query algorithm only outputs the shortest distance, but we do not know how to find the actual shortest paths. So another open problem is whether we can find the actual shortest path with additional $O(l)$ query time, where $l$ is the number of edges in the returned shortest path. Second, since we used \cite[Observation 2.1]{Ren20}, our oracle can only deal with positive edge weights. Can we extend our oracle to also deal with negative edge weights?

For every parameter $f$, the $r$-truncated DSO in \cref{sec:r-truncated-DSO} can actually handle $f$ edge/vertex deletions in $\Tilde{O}(f^\omega r)$ query time. (See also \cite{vdBS19}.) However, as far as we know, \cite[Observation 2.1]{Ren20} only works for one failure. It would be exciting to extend \cite[Observation 2.1]{Ren20} or our (full) DSO to also handle $f$ failures.

\section*{Acknowledgment}
We thank Ran Duan and Tianyi Zhang for their helpful discussions during the initial stage of this research. We are grateful to anonymous reviewers for their helpful comments. We would also like to thank an anonymous reviewer for suggesting the title of \cref{sec:breaking-tie}, and another anonymous reviewer for pointing out a subtle issue regarding the invertibility of polynomial matrices (and fixing the issue).

\bibliography{main}	
\appendix

\section{Omitted Proofs in \cref{sec:preliminaries}}\label{sec:apd-FMM}

\MatMulI*
\begin{proof}
	The proof is adapted from \cite[Lemma 7.7]{Blaser13}; readers familiar with tensor and tensor rank may refer to the proof of that lemma.
	
	Let $g(n) = \MM(n^a, n^{b+r}, n^{c+r})$, then it is easy to see that in $g(n)$ operations we can compute $n^r$ matrix multiplication instances of size $n^a\times n^b\times n^c$. We will use induction to prove that for every integer $k$, $n^r$ matrix multiplication instances of size $n^{ka}\times n^{kb}\times n^{kc}$ can be computed in $\lceil g(n)/n^r\rceil^k\cdot n^r$ operations.
	
	The case for $k=1$ is trivial. When $k > 1$, we can compute $n^r$ matrix multiplication instances of size $n^{ka}\times n^{kb}\times n^{kc}$ as follows. First, we partition every size-$(n^{ka}\times n^{kb})$ matrix into size-$(n^{(k-1)a}\times n^{(k-1)b})$ blocks, and partition every size-$(n^{kb}\times n^{kc})$ matrix into size-$(n^{(k-1)b}\times n^{(k-1)c})$ blocks. Then we can reduce the problem to computing $n^r$ matrix multiplication instances of size $n^a\times n^b\times n^c$ using ``big operations'', where each ``big operation'' is a matrix multiplication instance of size $n^{(k-1)a}\times n^{(k-1)b}\times n^{(k-1)c}$. It suffices to perform $g(n)$ ``big operations''. On the other hand, by the induction hypothesis, we can perform each $n^r$ ``big operations'' in $\lceil g(n)/n^r\rceil^{k-1}\cdot n^r$ operations. By partitioning these $g(n)$ ``big operations'' into groups of size $n^r$, we can compute all these ``big operations'' in $\lceil g(n)/n^r\rceil\cdot \lceil g(n)/n^r\rceil^{k-1}\cdot n^r$ operations, and we are done.
	
	Now it is easy to see that
	\[\omega(a, b, c) \le \inf_{n, k}\mleft\{\log_{n^k}\mleft(\lceil g(n)/n^r\rceil^k\cdot n^r\mright)\mright\}\le \inf_n\mleft\{\log_n\lceil g(n)/n^r\rceil\mright\}=\omega(a, b+r, c+r) - r.\qedhere\]
\end{proof}

\MatMulII*
\begin{proof}
	Let $0\le \tau_1 < \tau_2 \le 1$. Then:\begin{itemize}
		\item By \cref{lemma:matmul1}, $(\tau_2-\tau_1) + \omega(1, 1-\tau_2, 1-\tau_2) \le \omega(1, 1-\tau_1, 1-\tau_1)$, which means $\tau_1 + f(\tau_1) \ge \tau_2 + f(\tau_2)$.
		\item For every integer $n$, we can compute the product of an $n\times n^{1-\tau_1}$ matrix and an $n^{1-\tau_1}\times n^{1-\tau_1}$ matrix, by using $n^{2(\tau_2-\tau_1)}$ invocations of multiplication algorithms for matrices of dimension $n\times n^{1-\tau_2}$ and $n^{1-\tau_2}\times n^{1-\tau_2}$. Therefore $\omega(1, 1-\tau_1, 1-\tau_1) \le 2(\tau_2 - \tau_1) + \omega(1, 1-\tau_2, 1-\tau_2)$, which means $2\tau_1 + f(\tau_1) \le 2\tau_2 + f(\tau_2)$.\qedhere
	\end{itemize}
\end{proof}

\end{document}